\newtheorem{theorem}{Theorem}
\newtheorem{lemma}{Lemma}[section]
\newtheorem{proposition}[lemma]{Proposition}
\theoremstyle{definition}
\numberwithin{equation}{section}
\newcommand{\beq}{\[}
\newcommand{\eeq}{\]}
\newcommand{\RE}{\mathbb R}
\newcommand{\R}{\mathbb R}
\newcommand{\C}{\mathbb C}
\newcommand{\E}{\mathbb E}
\newcommand{\PP}{\mathbb P}
\newcommand{\NN}{\mathcal{N}}
\newcommand{\Tr}{\operatorname{Tr}}
\newcommand{\ve}{\varepsilon}
\newcommand{\al}{\alpha}
\newcommand{\de}{\delta}
\renewcommand{\Im}{\operatorname{Im}}
\renewcommand{\Re}{\operatorname{Re}}
\renewcommand{\ln}{\log}
\newcommand{\uu}{{\bf u}}
\newcommand{\vv}{{\bf v}}
\newcommand{\xx}{{\bf x}}
\newcommand{\ww}{{\bf w}}
\newcommand{\s}{s}
\newcommand{\eps}{\varepsilon}
\newcommand{\NH}{\mathcal{N}}
\newcommand{\hscale}{\eta}
\title[]{Local Marchenko-Pastur Law at the Hard Edge of Sample Covariance Matrices}
\author[]{Claudio Cacciapuoti\email{cacciapuoti@hcm.uni-bonn.de} \and Anna Maltsev\email{annavmaltsev@gmail.com}  \and Benjamin Schlein
\email{benjamin.schlein@hcm.uni-bonn.de}}
\address{Hausdorff Center for Mathematics\\
Institute for Applied Mathematics, University of Bonn\\
Endenicher Allee 60, 53115 Bonn, Germany}
\keywords{Random matrices, covariance matrices, Marchenko-Pastur law, density of states, delocalization.}
\subjclass[2010]{60B20, 60B12, 47B80}
\begin{document}

\maketitle

\begin{abstract}
Let $X_N$ be a $N\times N$ matrix whose entries are i.i.d. complex random variables with mean zero and variance $\frac{1}{N}$. We study the asymptotic spectral distribution of the eigenvalues of the covariance matrix $X_N^*X_N$ for $N\to\infty$. We prove that the empirical density of eigenvalues in an interval $[E,E+\eta]$ converges to the Marchenko-Pastur law locally on the optimal scale, $N \eta /\sqrt{E} \gg (\log N)^b$, and in any interval up to the hard edge, $\frac{(\log N)^b}{N^2}\lesssim E \leq 4-\kappa$, for any $\kappa >0$. As a consequence, we show the complete delocalization of the eigenvectors.
\end{abstract}

\section{Introduction}
Let $X$ be a $ N\times M$ matrix with entries $x_{ij} = \Re x_{ij} + i \Im x_{ij}$. We assume that $\Re x_{ij}$ and $\Im x_{ij}$ are independent identically distributed real random variables with mean zero and variance $1/2$ so that
\beq
\E x_{ij} =0\quad\textrm{and}\quad \E|x_{ij}|^2 =1 \qquad i=1,\dots,N, \, j = 1, \dots, M\,,
\eeq
In what follows we shall denote by $X_N$ the scaled matrix
\begin{equation}
\label{e:XN}
X_N=X/\sqrt{N}.
\end{equation}
We denote by $\nu$ the probability distribution of $\Re x_{ij}$ and $\Im x_{ij}$. Let $s_\al$, $\al=1,...,N$, be the eigenvalues of $X_N^*X_N$. Since $X_N^*X_N$ is positive definite we can assume that $0 \leq s_1\leq s_2 \leq \dots \leq s_N$. The results of this paper extend easily to $X_N$ having real entries; to simplify the notation, we will consider in the following only the case of complex entries. 

Assume $d = \lim N/M > 0$ and let \[\lambda_{\pm} = \sqrt{1 \pm d^2}.\]
Marchenko and Pastur showed in \cite{MP67} the convergence of the density of the eigenvalues $s_1,\dots, s_N$ towards the Marchenko-Pastur law
\begin{equation}
\label{MP}
\rho_{MP}(E) = \frac{1}{2\pi}\sqrt{\frac{(\lambda_{+}-E)(E - \lambda_{-})}{E^2}},
\end{equation}
whenever $E \in [\lambda_{-}, \lambda_{+}]$ and 0 otherwise. In this paper, we will be interested in the case $d=1$. In this case the Marchenko-Pastur law is supported on the interval $[0,4]$ and is given by
\[ \rho_{MP}(E) = \frac{1}{2\pi} \sqrt{\frac{(4-E)}{E}} \]
It has therefore a $E^{-1/2}$ singularity close to the origin $E = 0$. This reflects the fact that the typical distance between eigenvalues is of order $\sqrt{E}/N$ rather than $1/N$, as it is in the bulk; for this reason, $E=0$ is known as the hard edge of the sample covariance matrix $X_N^* X_N$ (soft edges are instead characterized by the fact that the typical distance between neighbouring eigenvalues is larger than in the bulk). While the result of \cite{MP67} determines the convergence to (\ref{MP}) on intervals of order one, containing typically order $N$ eigenvalues, in the present paper we establish the convergence of the density of states locally, on intervals containing typically a bounded number of eigenvalues, independent of $N$. In particular, we consider intervals close to the hard edge $E=0$. As a direct consequence of the local validity of the Marchenko-Pastur law, we obtain the complete delocalization of the eigenvectors associated to eigenvalues up to the edge. A  further possible application of our results consists in establishing the universality of the local eigenvalue correlations close to the hard edge; this can be obtained following the receipt of \cite{eprsy}, making use of the result of \cite{ap05}, in the case of complex entries, or similarly to \cite{esy-inv,esyy12}, using the method of the local relaxation flow, for both $X_N$ having real or complex entries. We observe, however, that the universality of the local eigenvalue correlations close to the hard edge (where they can be described in terms of the so called Bessel kernel) has already been established, using a different approach, in \cite{TV0}. 

\bigskip

In the last years, a lot of progress was achieved in the spectral analysis of random matrices. Local convergence of the density of states of Wigner matrices to the semicircle law and delocalization of the eigenvectors has been established in \cite{esy-ap09,ESYcmp09,esy09,eyy10b}. Universality of the local eigenvalue correlations was proven for Wigner ensembles with arbitrary symmetry (real symmetric, hermitian, or quaternion hermitian ensembles) in \cite{esy-inv,eyy10b}. This result was obtained by the introduction of the local relaxation flow, a flow for the eigenvalues of the Wigner matrix with the property of fast relaxation to equilibrium (and such that, locally, it remains close to the Dyson Brownian motion described by the eigenvalue when the entries are evolved by independent Brownian motions). For ensembles of hermitian Wigner matrices, universality was proven earlier in \cite{eprsy,tv09,erstvy}. In all these proofs of universality, the local convergence of the density of states was a crucial ingredient. Universality at the edge of Wigner matrices was proven in \cite{S} and more recently in \cite{tv10cmp,abp,eyy10b,ly}. For sample covariance matrices with $0< d < 1$, local convergence to the Marchenko-Pastur law and universality of the local eigenvalue correlations were determined in the bulk \cite{esyy12,tv-rxv09} and at the soft edge \cite{WangSoftEdge,abp,py12}. More recently, local convergence of the density of states and delocalization results have also been obtained for more structured ensembles, such as the adjacency matrices of Erd{\H o}s-R{\'e}nyi graphs \cite{spa1, spa2} and band matrices \cite{ban}. In this paper, we focus on the hard edge of sample covariance matrices, proving the local convergence of the density of states to the Marchenko-Pastur law on the optimal scale (up to logarithmic corrections). As a consequence, we obtain complete delocalization of the eigenvectors associated with eigenvalues close to the hard edge. 

\bigskip

After the completion of our work, we learned that, independently from us, Bourgade, Yau and Yin study in \cite{byy} the convergence of the density of the eigenvalues of a random matrix $X$ with no symmetry constraints towards the circular law, on optimal scales. The basic ingredient of their proof is the study of the spectrum of the hermitization $(X-z)^* (X-z)$. In particular, for $z=0$, they obtain results similar to ours for the eigenvalues of sample covariance matrices.

\bigskip

An important object used in the proof of the local validity of the Marchenko-Pastur law is the Stieltjes transform defined for any $\theta \in \C$ with $\text{Im } \theta > 0$  by
\begin{equation}
\label{e:DeN}
\Delta_N (\theta)
=
\frac{1}{N} \Tr (X_N^*X_N -\theta )^{-1}
=
\frac{1}{N}\sum_{\alpha=1}^N \frac{1}{s_\al - \theta}\,.
\end{equation} 
In a similar way, one defines $\Delta (\theta)$ to be the Stieltjes transform of the Marchenko Pastur distribution. In the case $d=1$ that will be considered in this paper 
\begin{equation}
\label{e:De}
\Delta(\theta )= \int_{\RE} \frac{1}{x-\theta} \rho (x) dx = -\frac{1}{2} + \frac{1}{2} \sqrt{1-\frac{4}{\theta}}\,.
\end{equation}
Local convergence towards the Marchenko-Pastur law follows from the convergence of $\Delta_N$ towards $\Delta$. 

To simplify our analysis, we will assume that $\nu$ has subgaussian decay, i.e., that there exists $\de_0>0$ such that
\begin{equation}
\label{e:gd}
\int_\RE e^{\de_0 x^2} d\nu(x) < \infty\,.
\end{equation}
This condition is needed to apply a version of a theorem of Hanson and Wright as formulated in
\cite[Prop. 4.5]{esy09}, see also Proposition \ref{p:HW} below. At the price of getting weaker convergence rates, this assumption can be substantially relaxed (existence of sufficiently high moments is sufficient). Furthermore, we assume that probability density function of the real and imaginary parts of the entries is bounded; this will simplify the study of the eigenvalues located very close to the origin. Also in this case, improvements are certainly possible.

\bigskip  

Our first result is a proof of a bound on the number of eigenvalues $s_\alpha$ in a window $I=[E,E+\eta]$, valid up to the hard edge and for small $\eta$ (s.t. $N\eta/\sqrt{E} \gg (\log N)^b$, $b > 2$). The proof of the following theorem can be found in Section \ref{section apriori}.

\begin{theorem}
\label{t:apriori}
Let $X_N$ be a $N\times N $ matrix as described in \eqref{e:XN}, whose entries satisfy (\ref{e:gd}). 
Let $I=[E,E+\eta]$ with $N\eta/\sqrt{E} \geq (\log N)^b$, for some $b > 2$.  Denote by $\NH_I$ the number of eigenvalues of $X_N^*X_N$ in $I$. Then there exist constants $c,C, K_0>0$ such that, for any $K \geq K_0$ and $N$ large enough,
\begin{equation}
\label{e:apriori}
\PP\left(\NH_I\geq K\frac{N\eta}{\sqrt E}\right)
\leq    e^{-c \sqrt{K \frac{\eta N}{\sqrt{E}}} }
\,.
\end{equation}
\end{theorem}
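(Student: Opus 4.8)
The standard route for such an a priori bound is via the Stieltjes transform and the fact that the imaginary part of the resolvent controls the counting function. Concretely, I would fix a scale $\eta'$ comparable to $\eta$ (or slightly larger, still with $N\eta'/\sqrt{E} \ge (\log N)^b$) and use the elementary inequality
\[
\NH_I \;\le\; C\, N\,\eta'\, \Im\, \Delta_N(E+\tfrac{\eta'}{2}+i\eta')
\;=\; C\,\eta' \sum_{\alpha=1}^N \frac{\eta'}{(s_\alpha-E-\eta'/2)^2 + \eta'^2}\,,
\]
which holds because each eigenvalue in $I$ contributes at least a constant amount to the sum on the right. Thus the problem reduces to a \emph{large deviation upper bound} for $\Im\,\Delta_N(\theta)$ at the spectral parameter $\theta = E+\eta'/2 + i\eta'$: one wants to show that the event $\Im\,\Delta_N(\theta)\ge K/\sqrt{E}$ has probability at most $\exp(-c\sqrt{KN\eta'/\sqrt{E}})$.

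To control $\Delta_N(\theta)$ I would use the Schur complement / self-consistent equation machinery, which is the workhorse of all local-law proofs. Writing $B = X_N^*X_N$ and expanding the diagonal resolvent entries $G_{kk}(\theta) = (B-\theta)^{-1}_{kk}$ via the Schur formula gives
\[
G_{kk}(\theta) = \frac{1}{-\theta(1 + \Delta_N^{(k)}(\theta)) + \Upsilon_k}\,,
\]
where $\Delta_N^{(k)}$ is the Stieltjes transform of the minor obtained by removing the $k$-th column of $X_N$ (so $|\Delta_N - \Delta_N^{(k)}| \le 1/(N\eta')$ by interlacing) and $\Upsilon_k$ is an error term built from quadratic forms $\langle \mathbf{x}_k, (\cdots)\mathbf{x}_k\rangle$ minus their expectation. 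The key probabilistic input is the Hanson–Wright-type concentration of such quadratic forms around their mean (Proposition \ref{p:HW}), which under the subgaussian hypothesis \eqref{e:gd} gives that $|\Upsilon_k|$ is small with overwhelming probability; this is exactly where the subgaussian assumption enters and where the tail of the form $\exp(-c\sqrt{\cdot})$ originates, since the relevant random variable is a quadratic form whose typical fluctuation is $\sqrt{\Im G_{kk}/(N\eta')}$ and whose deviations decay like the square root of the deviation parameter. Feeding this back, $\Delta_N = \frac{1}{N}\sum_k G_{kk}$ satisfies a perturbed version of the fixed-point equation $\Delta = -1/(\theta(1+\Delta))$ whose stable solution is exactly $\Delta(\theta)$ in \eqref{e:De}, and since $\Im\,\Delta(\theta) \le C/\sqrt{E}$ uniformly for $E \le 4-\kappa$, a bootstrap/continuity argument in $\eta'$ propagates the bound $\Im\,\Delta_N(\theta) \le K/\sqrt{E}$ down to the desired scale, with the stated failure probability coming from a union bound over the $N$ indices $k$ of the Hanson–Wright tail.

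The main obstacle is the \emph{instability of the self-consistent equation near the hard edge}: the map $w \mapsto -1/(\theta(1+w))$ is only weakly contractive when $\Im\,\theta$ is of order $\eta'$ and $\Re\,\theta = E$ is itself small, because the relevant stability factor degenerates like a power of $E$ (this is precisely the $\sqrt{E}/N$ eigenvalue spacing showing up analytically). One therefore cannot simply iterate; instead I expect to need a careful choice of the continuity argument — starting from a large scale $\eta_0 \sim 1$ where the bound is easy, and decreasing $\eta'$ in small multiplicative steps while tracking how the error term and the stability constant scale in $E$ — together with a priori control that no eigenvalue sits extremely close to $0$ (here the boundedness of the density of the entries, mentioned after \eqref{e:gd}, is used to rule out a pile-up at the origin). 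A secondary technical point is that at scales $N\eta'/\sqrt{E}$ only polylogarithmically large, the "overwhelming probability" statements must be made quantitative with the precise $\exp(-c\sqrt{KN\eta'/\sqrt{E}})$ rate rather than the usual $N^{-D}$ bookkeeping, so the Hanson–Wright estimate must be applied in its sharp, deviation-parameter-dependent form rather than black-boxed.
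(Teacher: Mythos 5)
Your first step (bounding $\NH_I$ by $\eta\,\Im\Tr(X_N^*X_N-\theta)^{-1}$) matches the paper, but the route you then take --- Schur complement, Hanson--Wright for the error terms, and a multiplicative bootstrap in the scale down from $\eta_0\sim 1$ --- is essentially the argument the paper reserves for Theorem \ref{t:sti}, and for that argument Theorem \ref{t:apriori} is a \emph{prerequisite}, not a consequence. The circularity is concrete: to apply Proposition \ref{p:HW} to the quadratic form you must control its variance $\Tr A^*A=\frac{E}{N^2}\sum_\alpha |s_\alpha^{(k)}-\theta|^{-2}$, and the needed bound $\Tr A^*A\lesssim \sqrt{E}/(N\eta)$ (Lemma \ref{p:Atilde}) is obtained in the paper by decomposing the spectrum into dyadic windows and applying Theorem \ref{t:apriori} in each of them (plus Proposition \ref{p:wegner} near zero, which is where the bounded-density assumption enters --- an assumption Theorem \ref{t:apriori} does not make and which your proposal imports unnecessarily). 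You correctly identify the degeneracy of the stability factor at the hard edge as the main obstacle, but you do not resolve it; ``a careful choice of the continuity argument'' is exactly the missing step. There is also a structural mismatch with the statement to be proved: the target is a tail bound whose exponent improves with $K$, namely $\exp\bigl(-c\sqrt{K N\eta/\sqrt{E}}\bigr)$, whereas a stability/bootstrap argument at best shows $\Delta_N$ is close to $\Delta$, which rules out counts a constant factor above the Marchenko--Pastur prediction but does not by itself produce the $K$-dependence.

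The paper's proof is a short, self-contained dichotomy that avoids all of this. From \eqref{e:resHkk} and $|\Im(1/z)|\le 1/|\Im z|$, discarding all eigenvalues of the minor outside $I$ gives $\NH_I \le C\,\frac{N^2\eta^2}{E}\,\frac{1}{N}\sum_{k}\bigl(\sum_{\alpha:\, s_\alpha^{(k)}\in I} N|\ww_k\cdot\vv_\alpha^{(k)}|^2\bigr)^{-1}$. Hence either $\NH_I\le K N\eta/\sqrt{E}$, or there exists some $k$ for which the column $\xx_k=\sqrt{N}\ww_k$ has total projection at most $\NH_I/4$ onto the eigenvectors of $W_k^*W_k$ with eigenvalue in $I$, of which there are at least $\NH_I-1\ge \NH_I/2$ by interlacing. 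The latter is precisely the small-projection event controlled by Proposition \ref{p:xi-1}, whose $e^{-c\sqrt{m}}$ tail with $m\gtrsim K N\eta/\sqrt{E}$ yields \eqref{e:apriori} after a union bound over $k$. No self-consistent equation, no continuity argument, and no assumption on the density of the entries is needed; if you insisted on your route, you would first have to establish an a priori count bound of exactly this type to control $\Tr A^*A$, at which point the theorem is already proved.
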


Using the a priori bound in Theorem \ref{t:apriori} we prove the convergence of the Stieltjes transform $\Delta_N(E+i\eta)$ of the sample covariance matrices towards the Stieltjes trasnform $\Delta(E+i\eta )$ of the Marchenko-Pastur law, up to the hard edge, and close to the real axis. 

\begin{theorem}
\label{t:sti}
Let $X_N$ be a $N\times N $ matrix as described in \eqref{e:XN}, whose entries satisfy (\ref{e:gd}). Assume moreover that the probability density function of the real and imaginary part of the entries is bounded. Moreover set $\theta = E + i \eta$, with $E\leq 4-\kappa$, $0< \eta < \kappa E$, $N\eta / \sqrt{E} \geq (\log N)^b$, for some $b > 4$ and $0 < \kappa < 1$ (these bounds also imply that $E \geq (\log N)^{2b}/\kappa^2 N^2$). Then there exist $\ve_0 > 0$, $c>0$ such that for $0<\ve<\ve_0$ and $N$ large enough,
\beq
\PP\left( \left|\Delta_N(\theta)-\Delta(\theta)\right|\geq \frac{\ve}{\sqrt{E}}
\right) \leq e^{-c \ve \sqrt{\frac{N\eta}{\sqrt{E}}}} +  e^{-c(\log N)^{b/4}}  \,.
\eeq
\end{theorem}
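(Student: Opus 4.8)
The plan is the standard self-consistent equation plus continuity argument, the only nonclassical input being the a priori eigenvalue bound of Theorem~\ref{t:apriori}. Write $Q=X_N^*X_N$, $G=(Q-\theta)^{-1}$, so that $\Delta_N(\theta)=\frac1N\sum_{i=1}^N G_{ii}$; let $\mathbf x_i$ denote the $i$-th column of $X_N$, let $X^{(i)}$ be $X_N$ with that column removed, set $Q^{(i)}=(X^{(i)})^*X^{(i)}$ and $\widetilde G^{(i)}=(X^{(i)}(X^{(i)})^*-\theta)^{-1}$. The Schur complement formula, together with the identity $A(A^*A-\theta)^{-1}A^*=\mathbf 1+\theta(AA^*-\theta)^{-1}$ applied to $A=X^{(i)}$, gives, after the \emph{exact} cancellation of $\|\mathbf x_i\|^2=Q_{ii}$, the clean identity
\[
G_{ii}=\frac{1}{-\theta\bigl(1+\mathbf x_i^*\widetilde G^{(i)}\mathbf x_i\bigr)}\,.
\]
Since $\mathbf x_i$ is independent of $\widetilde G^{(i)}$, I would replace the quadratic form by its conditional expectation and use $\frac1N\Tr\widetilde G^{(i)}=-\frac1{N\theta}+\frac1N\Tr(Q^{(i)}-\theta)^{-1}=\Delta_N+\delta_i'$, where $|\delta_i'|\lesssim\frac1{N|\theta|}+\frac1{N\eta}$, the second term coming from Cauchy interlacing for the rank-one minor. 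This yields the approximate self-consistent equation $\Delta_N=\bigl(-\theta(1+\Delta_N)\bigr)^{-1}+\mathcal E$, to be compared with the exact relation $\theta\,\Delta(1+\Delta)=-1$, i.e.\ $\Delta=\bigl(-\theta(1+\Delta)\bigr)^{-1}$, satisfied by \eqref{e:De}.

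Two quantitative ingredients remain. First, a bound on $\mathcal E$: the fluctuation $Z_i:=\mathbf x_i^*\widetilde G^{(i)}\mathbf x_i-\frac1N\Tr\widetilde G^{(i)}$ is controlled by the Hanson--Wright estimate of Proposition~\ref{p:HW}, which requires $\frac1N\Tr|\widetilde G^{(i)}|^2=\frac1\eta\Im\bigl(\tfrac1N\Tr\widetilde G^{(i)}\bigr)$. Combined with $\Im\Delta_N\lesssim 1/\sqrt E$ holding with overwhelming probability — obtained from Theorem~\ref{t:apriori} applied on dyadic windows around $E$, with a union bound over the $O(\log N)$ scales — and interlacing, this gives $\frac1N\Tr|\widetilde G^{(i)}|^2\lesssim\frac1{\eta\sqrt E}$, so $|Z_i|\le\ve/\sqrt E$ simultaneously for all $i$ with probability at least $1-N\,e^{-c\ve\sqrt{N\eta/\sqrt E}}$ (applying the estimate with a sufficiently small implicit constant). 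Writing $\delta_i:=\delta_i'+Z_i$ one has $G_{ii}=\bigl(-\theta(1+\Delta_N)-\theta\delta_i\bigr)^{-1}$; since $|\theta|\asymp E$ and, once $\Delta_N$ lies in the basin of $\Delta$, $|1+\Delta_N|\asymp 1/\sqrt E$ (because $|1+\Delta(\theta)|\asymp 1/\sqrt E$ for $E\le 4-\kappa$), while $|\delta_i|\lesssim\frac1{NE}+\frac1{N\eta}+|Z_i|\lesssim\ve/\sqrt E$, expanding $G_{ii}-\bigl(-\theta(1+\Delta_N)\bigr)^{-1}$ and averaging over $i$ gives $|\mathcal E|\lesssim\ve/\sqrt E$ on this event.

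Second, a stability analysis: with $f(w)=w+\frac1{\theta(1+w)}$ one has $f(\Delta)=0$ and $f(\Delta_N)=\mathcal E$; using $\theta\Delta(1+\Delta)=-1$ one computes $f'(\Delta)=\frac{1+2\Delta}{1+\Delta}$, whose modulus is comparable to $\sqrt{4-E}$ and therefore bounded below by $c\sqrt\kappa$ for $E\le 4-\kappa$ — it degenerates only at the soft edge $E=4$, which is excluded — while $f''$ is $O(\sqrt E)$ on the relevant domain. Since the second root $\Delta_-$ of the quadratic has $\Im\Delta_-\asymp-1/\sqrt E<0$ whereas $\Im\Delta_N>0$ always, $\Delta_N$ cannot approach $\Delta_-$, and a Taylor expansion gives $|\Delta_N-\Delta|\le C|\mathcal E|+C\sqrt E\,|\Delta_N-\Delta|^2$. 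Feeding this into a continuity argument — run along a suitable path in the admissible parameter region starting from the classical regime $|\theta|\asymp 1$, where $|\Delta_N-\Delta|\le\ve/\sqrt E$ is already known, using the Lipschitz bounds $|\partial_\eta\Delta_N|,|\partial_E\Delta_N|\le\Im\Delta_N/\eta\lesssim 1/(\eta\sqrt E)$ on a sufficiently fine grid to propagate the bound between consecutive points — upgrades $|\Delta_N-\Delta|\le\ve/\sqrt E$ to $|\Delta_N-\Delta|\le\ve/(2\sqrt E)$, closing the induction. Collecting the exceptional probabilities ($N e^{-c\ve\sqrt{N\eta/\sqrt E}}$ from Hanson--Wright over the grid and $e^{-c(\log N)^{b/4}}$ from the repeated use of Theorem~\ref{t:apriori}) yields the claim.

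The step I expect to be the main obstacle is controlling $\mathcal E$ uniformly down to the optimal scale $N\eta/\sqrt E\gtrsim(\log N)^b$ at the hard edge: this rests on $\Im\Delta_N\lesssim 1/\sqrt E$ holding with overwhelming probability (which itself requires Theorem~\ref{t:apriori} on all relevant scales), and on separately controlling the eigenvalues in windows near the origin lying below the scale that Theorem~\ref{t:apriori} reaches — e.g.\ ruling out an anomalous cluster of eigenvalues near $0$ when $E$ is as small as $(\log N)^{2b}/N^2$ — which is precisely where the boundedness hypothesis on the density of the entries is used.
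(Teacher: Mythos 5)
Your outline follows essentially the same route as the paper: the Schur-complement identity and self-consistent equation, Hanson--Wright (Proposition \ref{p:HW}) for the quadratic-form fluctuation with variance governed by $\frac{1}{N}\Tr |\widetilde G^{(i)}|^2$, interlacing for the minor-trace difference, stability of the fixed-point equation, and a continuity/bootstrap argument seeded at $\eta \asymp 1$ where the global law of \cite{MP67} applies. Your variance bound $\frac{1}{N}\Tr|\widetilde G^{(i)}|^2 = \Im\bigl(\frac{1}{N}\Tr\widetilde G^{(i)}\bigr)/\eta \lesssim 1/(\eta\sqrt{E})$ is the paper's Lemma \ref{p:Atilde} in disguise, and your stability step (the computation $f'(\Delta)=\frac{1+2\Delta}{1+\Delta}$ with $|f'(\Delta)|\asymp\sqrt{4-E}$, plus the observation that $\Delta_N$ cannot approach the second root since $\Im \Delta_- <0$) is the same mechanism as the paper's lower bound $|\Delta_N+\Delta+1|\geq \Im\Delta\geq C_0/\sqrt{E}$ on the segment joining $\theta$ to $2+2i\kappa$; your grid-plus-Lipschitz continuity is an acceptable variant of the paper's connectedness argument, and your bookkeeping of the two error probabilities matches the paper's.

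The genuine gap is the step you yourself defer to the final paragraph: the control of eigenvalues in $[0,(\log N)^b/N^2]$, below the smallest scale that Theorem \ref{t:apriori} can handle. These eigenvalues contribute of order $\frac{\eta}{N}\,\NH[0,(\log N)^b/N^2]/E^2$ to $\Im\Delta_N$ (equivalently to $\Tr A^*A$), and when $E$ is as small as $(\log N)^{2b}/(\kappa^2 N^2)$ an anomalous cluster of order $N\eta/\sqrt{E}$ eigenvalues near $0$ would destroy your bound $\Im\Delta_N\lesssim 1/\sqrt{E}$. This cannot be excluded by Theorem \ref{t:apriori}: a dyadic union bound over windows shrinking to the origin fails because the probability estimate \eqref{e:apriori} degenerates once the window reaches size $o\bigl((\log N)^b/N^2\bigr)$, which is exactly why the paper stops the dyadic decomposition at $(\log N)^b/N^2$ (paying the $e^{-c(\log N)^{b/4}}$ term, whence $b>4$) and handles the remaining interval by a separate Wegner-type estimate, Proposition \ref{p:wegner}. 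That estimate is not a routine detail: it is proved via the resolvent diagonal formula at $\theta\sim (K/N^2)(1+i)$, a Markov/H\"older bound with exponent $p\sim L$, interlacing to guarantee $\sim L$ minor eigenvalues below $K/N^2$, and an inverse-moment bound (Lemma A.1 of \cite{ms11}) for $\sum_{\al<L}|\xx_1\cdot\vv_\al^{(1)}|^2$, which is where the bounded-density hypothesis is actually used. You correctly identify where this hypothesis must enter, but you do not supply the argument, and without it the control of $\mathcal{E}$, hence the whole proof, does not close at the hard edge.
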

The proof of this theorem is in Section \ref{section selfconsistent equation}. The convergence of the Stieltjes transform immediately implies the convergence of the density of states.

\begin{theorem}
\label{p:db}
Let $X_N$ be a $N\times N $ matrix as described in \eqref{e:XN}, whose entries satisfy (\ref{e:gd}). Assume moreover that the probability density function of the real and imaginary part of the entries is bounded. Suppose $E\leq 4-\kappa$, $0< \eta < \kappa E$, $N\eta / \sqrt{E} \geq (\log N)^b$, for some $b > 4$ and $0 < \kappa < 1$, and let $I = [E, E+\eta]$. 
Then  there exist $\ve_0 > 0$, $C, c>0$ such that for $0<\ve<\ve_0$ and $N$ large enough,
\begin{equation}
\label{e:db}
\PP\left(\left|\frac{\NN_{I}}{N \eta }- \frac{1}{\eta} \int_E^{E+\eta} ds  \rho(s) \right|
  \geq \frac{\ve}{\sqrt{E}}   \right) \leq C e^{-c \ve^2 \sqrt{\frac{N\eta}{\sqrt{E}}}} + C e^{c(\log N)^{b/4}}\,.
\end{equation}
\end{theorem}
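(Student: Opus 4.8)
The plan is to derive Theorem~\ref{p:db} from Theorem~\ref{t:sti} by a standard Helffer--Sj\"ostrand / contour-integration argument that turns control of the Stieltjes transform $\Delta_N$ on a single line $\{\Im\theta = \eta\}$ into control of the eigenvalue counting function $\NN_I$. The starting point is the elementary identity
\[
\frac{1}{\pi}\Im \Delta_N(E+i\eta) = \frac{1}{N}\sum_{\alpha=1}^N \frac{1}{\pi}\,\frac{\eta}{(s_\alpha-E)^2+\eta^2}
= \frac{1}{N}\int_\RE \theta_\eta(E-x)\, d\mu_N(x),
\]
where $\mu_N = \sum_\alpha \delta_{s_\alpha}$ is the (unnormalized) empirical measure and $\theta_\eta$ is the Poisson kernel at scale $\eta$. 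The analogous identity holds for $\rho=\rho_{MP}$ in place of $N^{-1}\mu_N$. Thus Theorem~\ref{t:sti}, applied on the line at height $\eta$, already controls the difference of the two measures \emph{smeared against the Poisson kernel} of width $\eta$. The task is to upgrade this to control of the sharp-cutoff counting function $\NN_I/(N\eta)$ with window exactly $\eta$.

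I would carry this out in the following steps. First, fix $E$ and $\eta$ as in the hypotheses and define $m(x) := \frac{1}{N\eta}(\NN_{[0,x]} - N\int_0^x \rho(s)\,ds)$, the normalized difference of counting functions; the goal is a bound on $m(E+\eta)-m(E)$ up to telescoping, but it is cleaner to bound $\frac{1}{N}\big|\NN_I - N\int_E^{E+\eta}\rho\big|$ directly. Second, I would express this quantity via an integration-by-parts/Helffer--Sj\"ostrand formula against a smooth cutoff $\chi$ that equals $1$ on $I$ and is supported on a slightly larger interval $[E-\eta, E+2\eta]$, with $|\chi'|\lesssim 1/\eta$: this reduces the problem to integrals of $\Im\Delta_N(x+iy)$ against derivatives of $\chi$ for $y$ ranging down to (essentially) the scale $\eta$ and up to order one, plus a boundary term at $y\sim 1$ which is trivially controlled by $\Im\Delta_N(E+i) - \Im\Delta(E+i)$ being small. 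Third, the contributions from $y$ of order one down to $y\sim\eta$ are handled by Theorem~\ref{t:sti} (applied, after a Cauchy--Riemann/monotonicity argument, along the whole relevant range of heights, or more simply by noting $y\mapsto y\,\Im\Delta_N(x+iy)$ is monotone so that a single bound at height $\eta$ suffices for the "mass in a window of size $y$" for all $y\ge\eta$). Fourth, the genuinely short-scale piece --- the fact that the Poisson kernel at scale $\eta$ only sees the counting function up to a window of comparable size --- is closed using the a priori bound of Theorem~\ref{t:apriori}: the number of eigenvalues in $I$, and in the dyadic neighbours $[E-2^k\eta, E-2^{k-1}\eta]$ and $[E+2^{k-1}\eta, E+2^k\eta]$, is at most $O(2^k N\eta/\sqrt E)$ with overwhelming probability, and summing the tails $\sum_k 2^{-k}\cdot 2^k$-type bounds shows the error from replacing the sharp cutoff by the smeared one is $O(N\eta/\sqrt E)$ with acceptable probability, since $\rho(E)\asymp 1/\sqrt E$ on $[E, E+\eta]$ when $E\le 4-\kappa$ (so $\int_E^{E+\eta}\rho \asymp \eta/\sqrt E$, which fixes the normalization by $1/\sqrt E$ in the statement and the right-hand side $\ve/\sqrt E$).

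The main obstacle, and the reason the a priori input of Theorem~\ref{t:apriori} is essential, is precisely that last step: the Stieltjes transform at height $\eta$ controls eigenvalue counts only up to smearing on scale $\eta$, so without a separate \emph{a priori} bound on how many eigenvalues can pile up in neighbouring windows of size $\eta, 2\eta, 4\eta,\dots$ one cannot rule out a large local fluctuation that is invisible to $\Im\Delta_N(E+i\eta)$. Once Theorem~\ref{t:apriori} supplies the deterministic-in-probability bound $\NH_{[E-L\eta,E+L\eta]} \lesssim L N\eta/\sqrt E$, the dyadic sum converges and the argument closes; the probability cost is the sum of the failure probabilities of Theorem~\ref{t:sti} and of Theorem~\ref{t:apriori} over the $O(\log N)$ dyadic scales, which is absorbed into the stated $Ce^{-c\ve^2\sqrt{N\eta/\sqrt E}} + Ce^{c(\log N)^{b/4}}$ (note the exponent on $\ve$ becomes quadratic, $\ve^2$, because optimizing the width of the smoothing cutoff against $\ve$ costs a square, exactly as in the standard Wigner-case derivation of local density of states from the Stieltjes transform). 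A couple of routine checks remain: that $\rho$ is bounded above and below by multiples of $1/\sqrt E$ uniformly on $[E,E+\eta]\subset(0,4-\kappa]$ (immediate from the explicit formula for $\rho_{MP}$ at $d=1$ together with $\eta<\kappa E$), and that $|\Delta(E+iy) - \Delta(E+i\eta)|$ and its smeared version behave as expected down to scale $\eta$, which follows from differentiating the explicit expression \eqref{e:De}. None of these present difficulty.
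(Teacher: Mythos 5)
Your proposal follows essentially the route the paper itself intends: Theorem \ref{p:db} is deduced from Theorem \ref{t:sti} by exactly this standard smoothing/Helffer--Sj\"ostrand argument, with the a priori bound of Theorem \ref{t:apriori} controlling the short-scale error from replacing the sharp cutoff by a smoothed one (the paper simply refers to \cite[Cor. 4.2]{esy-ap09} for the details, and the $\ve^2$ arises from optimizing the smoothing width against $\ve$, as you note). One routine adjustment: since Theorem \ref{t:sti} only covers heights $\eta < \kappa E$, near the hard edge you cannot invoke it (or any estimate in the paper) at height $y\sim 1$, so the vertical cutoff in the Helffer--Sj\"ostrand extension should be taken at scale $O(\eta)$ rather than order one, which removes the ``boundary term at $y\sim 1$'' and keeps every input within the range of Theorems \ref{t:sti} and \ref{t:apriori} (together with a grid/union bound in the energy variable).
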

The proof of Theorem \ref{p:db} can be obtained from Theorem \ref{t:sti}, similarly as in \cite[Cor. 4.2]{esy-ap09}. Finally, Theorem \ref{p:db} implies complete delocalization of the normalized eigenvectors of $X_N^*X_N$ associated with eigenvalues in the window $\left[\frac{(\ln N)^{b}}{\kappa^2 N^2},4-\kappa\right]$, for any $\kappa>0$. 

\begin{theorem}[Delocalization]
\label{t:del}
Let $X_N$ be a $N\times N $ matrix as described in \eqref{e:XN}, whose entries satisfy (\ref{e:gd}). Assume moreover that the probability density function of the real and imaginary part of the entries is bounded. Fix $0 < \kappa < 1$, $b > 4$. Then there exist constants  $c,\,C>0$ such that and for $N$ large enough,
\begin{equation}
\label{e:del}
\begin{aligned}
& \PP\left(\exists\, \uu \text{ s.t. } X_N^*X_N \uu = s \uu\,,\; \|\uu\|=1\,,\; s\in\left[\frac{(\ln N)^{2b}}{\kappa^2 N^2},4-\kappa\right]\;\textrm{and}\;  \|\uu\|_\infty \geq C \frac{(\ln N)^{\frac{b}{2}}}{N^{\frac{1}{2}}}\right)
\leq e^{-c (\ln N)^{\frac{b}{4}}}\,.
\end{aligned}
\end{equation}
\end{theorem}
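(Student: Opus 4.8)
The plan is to deduce Theorem~\ref{t:del} from the density-of-states estimate in Theorem~\ref{p:db} by the standard argument relating eigenvector concentration to the number of eigenvalues in a short window, as in \cite{esy-ap09,ESYcmp09}. Fix a normalized eigenvector $\uu$ with $X_N^*X_N\uu = s\uu$ and $s\in[(\ln N)^{2b}/(\kappa^2 N^2),4-\kappa]$. First I would recall the basic inequality: for any $\eta>0$, writing $G(\theta)=(X_N^*X_N-\theta)^{-1}$ with $\theta=s+i\eta$, and for any fixed coordinate direction $\ee_k$,
\[
\lf|\langle \ee_k,\uu\rangle\ri|^2 \leq \eta\, \Im\langle \ee_k, G(s+i\eta)\ee_k\rangle = \eta\, \Im G_{kk}(s+i\eta),
\]
because the spectral expansion gives $\Im G_{kk}(s+i\eta)=\sum_\alpha \frac{\eta}{(s_\alpha-s)^2+\eta^2}|\langle \ee_k,\uu_\alpha\rangle|^2 \geq \frac{1}{\eta}|\langle \ee_k,\uu\rangle|^2$ once we keep only the term $\alpha$ corresponding to $s$ (taking $s=s_\alpha$ exactly, or within $\eta$ of an eigenvalue). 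Summing a control on the diagonal resolvent entries over $k=1,\dots,N$ gives $\Tr\, \Im G(s+i\eta) = N\,\Im\Delta_N(s+i\eta)$, so it suffices to bound $\Im\Delta_N(s+i\eta)$ from above with high probability.

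Next I would choose the scale $\eta = \eta(s)$ so that $N\eta/\sqrt{s}$ is of order $(\ln N)^b$ — concretely $\eta = (\ln N)^b \sqrt{s}/N$, which is admissible in Theorem~\ref{p:db} (it satisfies $N\eta/\sqrt{s}\geq (\ln N)^b$, and since $s \geq (\ln N)^{2b}/(\kappa^2 N^2)$ one gets $\eta/s = (\ln N)^b/(N\sqrt{s}) \leq \kappa < 1$ as required, possibly after enlarging the lower threshold on $s$ by the constant in the statement). Applying Theorem~\ref{p:db} on the window $I=[s,s+\eta]$ together with $\Im\Delta_N(s+i\eta) \leq C \NN_{[s-\eta,s+\eta]}/(N\eta)$ (a short computation: the Poisson kernel $\eta/((s_\alpha-s)^2+\eta^2)$ is at most $1/\eta$ on the window and is summably small, i.e. $O(\eta/(N\,\mathrm{dist}^2))$ dyadically, off it, and the off-window contribution is itself controlled by Theorem~\ref{p:db} applied on a dyadic family of windows), we conclude that with probability at least $1 - e^{-c(\ln N)^{b/4}}$,
\[
\Im\Delta_N(s+i\eta) \leq \frac{C}{\sqrt{s}}\lf(\rho_{MP}\text{-average} + \frac{\ve}{\sqrt s}\ri) \leq \frac{C'}{\sqrt s},
\]
using $\rho_{MP}(E)\sim E^{-1/2}$ near the edge and $\eta \ll s$. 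Hence $\eta\,\Im G_{kk}(s+i\eta) \leq \eta\,\Im\Delta_N(s+i\eta)\cdot(\text{crude }\ell^\infty\text{-to-}\Tr\text{ bound})$; more carefully, one needs a pointwise bound on $\Im G_{kk}$, which follows because $\Im G_{kk}\leq \Im\Delta_N \cdot N$ is too lossy, so instead I would combine the trace bound with the a priori delocalization-type estimate for individual diagonal entries that comes out of the self-consistent equation analysis in Section~\ref{section selfconsistent equation} — but in the cleanest route one simply notes that for the specific eigenvector $\uu$ one has $|\langle \ee_k,\uu\rangle|^2 \leq \eta\,\Im G_{kk} \leq \eta N \Im\Delta_N \leq C N\eta/\sqrt{s} = C(\ln N)^b/\sqrt{s}\cdot\sqrt s /N$...

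Let me restate this last step correctly, since the naive bound $\Im G_{kk}\le N\Im\Delta_N$ loses a factor $N$: the right argument is that $|\langle\ee_k,\uu\rangle|^2 \leq \eta\,\Im G_{kk}(s+i\eta)$ and one needs $\Im G_{kk} = O(1/\sqrt s)$ \emph{uniformly in $k$}. This pointwise resolvent bound is exactly what the analysis behind Theorem~\ref{t:sti} provides (the self-consistent equation is derived entry-by-entry, so the individual $G_{kk}$ are controlled, not just their average); alternatively one gets it from Theorem~\ref{t:apriori} applied to the rank-one perturbed minor as in \cite{esy-ap09}. Granting $\Im G_{kk}(s+i\eta)\leq C/\sqrt s$ with probability $\geq 1-e^{-c(\ln N)^{b/4}}$ simultaneously for all $k$ (union bound over $N$ values of $k$ costs only a polynomial factor, absorbed into the stretched-exponential error), we get
\[
|\langle \ee_k,\uu\rangle|^2 \leq \eta\cdot \frac{C}{\sqrt s} = \frac{(\ln N)^b \sqrt s}{N}\cdot\frac{C}{\sqrt s} = \frac{C(\ln N)^b}{N},
\]
hence $\|\uu\|_\infty \leq C^{1/2}(\ln N)^{b/2}/N^{1/2}$, which is the claim. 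The final step is to remove the dependence on the particular $\uu$: cover the interval $[(\ln N)^{2b}/(\kappa^2 N^2),4-\kappa]$ by $O(N^{2})$ points $s_j$ spaced $\sim 1/N^2$ apart, run the above at each $s_j$, and note that any eigenvalue $s_\alpha$ in the interval lies within $1/N^2 \ll \eta(s_j)$ of some $s_j$, so the resolvent bound at $s_j+i\eta(s_j)$ still controls $|\langle\ee_k,\uu_\alpha\rangle|^2$; the union bound over the $O(N^2\cdot N)$ pairs $(j,k)$ again only costs a polynomial factor against $e^{-c(\ln N)^{b/4}}$.

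The main obstacle I anticipate is precisely obtaining the \emph{pointwise} (in the coordinate $k$) upper bound $\Im G_{kk}(s+i\eta)\leq C/\sqrt s$ near the hard edge, rather than merely the averaged bound on $\Delta_N$ furnished by Theorem~\ref{t:sti}; this requires either tracking the individual diagonal resolvent entries through the self-consistent equation argument (where the hard-edge singularity $\rho_{MP}\sim E^{-1/2}$ makes the stability analysis delicate at scales $\eta\sim\sqrt s/N \cdot(\ln N)^b$) or, alternatively, using the Schur complement / rank-one-perturbation identity to reduce $\Im G_{kk}$ to a quantity controlled by the eigenvalue-counting bound of Theorem~\ref{t:apriori} for the $(N-1)$-dimensional minor. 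Once that uniform resolvent bound is in hand, the passage to $\|\uu\|_\infty$, the union bound over coordinates, and the net argument over $s$ are all routine.
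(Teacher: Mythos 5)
Your outline is workable, but it takes a detour through the resolvent that the paper avoids, and one of your two proposed fixes for the key step cites the wrong ingredient. The paper argues directly on the eigenvector: from $X_N^*X_N\uu_\al=s_\al\uu_\al$ and $\|\uu_\al\|=1$ one has the exact identity $|\uu_\al(k)|^2=\bigl(1+\tfrac1N\sum_\beta s_\beta^{(k)}|\uu_\beta^{(k)}\cdot\xx_k|^2/(s_\al-s_\beta^{(k)})^2\bigr)^{-1}$, which is precisely the Schur-complement/rank-one mechanism you allude to, made concrete; restricting the sum to $\beta$ with $s_\beta^{(k)}\in[s_\al,s_\al+\eta]$, $\eta=\sqrt{s}(\log N)^b/N$, and using Theorem \ref{p:db} (with interlacing) to guarantee at least $C(\log N)^b$ minor eigenvalues in that window, plus Proposition \ref{p:xi-1} to lower-bound $\sum_\beta|\uu_\beta^{(k)}\cdot\xx_k|^2$, gives $|\uu_\al(k)|^2\le C(\log N)^b/N$ directly, with no resolvent, no pointwise bound on $G_{kk}$, and no net over $s$ (the grid/union-bound bookkeeping you describe is the honest way to make the ``for all $\al,k$'' statement uniform, and is harmless since $b>4$). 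Concerning your acknowledged main obstacle, the pointwise bound $\Im G_{kk}(s+i\eta)\le C/\sqrt{s}$: your first route does work, because the proof of Theorem \ref{t:sti} controls $\max_k|\Omega_k|$ (eq. \eqref{eq:PPmaxX}) and $G_{kk}=-1/\bigl(\theta(1+\Delta_N+\Omega_k/\sqrt{E})\bigr)$, so $\Delta_N\approx\Delta$ together with $\Im\Delta\ge C_0/\sqrt{E}$ (Lemma \ref{l:propDe}) yields $|G_{kk}|\le C/\sqrt{E}$ uniformly in $k$ on the good event. Your second route, however, is misattributed: Theorem \ref{t:apriori} is an \emph{upper} bound on local eigenvalue counts and cannot produce the needed \emph{lower} bound on $\Im\,\ww_k^*(W_kW_k^*-\theta)^{-1}\ww_k$; what is required there is a lower bound on the number of minor eigenvalues near $s$, i.e. Theorem \ref{p:db} (via interlacing) combined with Proposition \ref{p:xi-1} --- which is exactly the paper's argument, just applied to the eigenvector identity rather than to $G_{kk}$. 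With that correction, your plan closes; the paper's version is simply shorter because it never needs the resolvent bound at all.
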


\section{Basic definitions and results}
\label{s:def-res}

In this section we collect several definitions and results which will be used to prove the main theorems. 

\subsection{A formula for $\Delta_N$}

The proofs of Theorems \ref{t:apriori} and \ref{t:sti} rely on the following formula for the diagonal components of the resolvent $\left(X_N^*X_N-\theta\right)^{-1}$ (see \cite{esyy12})
\begin{equation}
\begin{split}
\label{e:resHkk}
\left((X_N^*X_N -\theta)^{-1}\right)_{kk} &=
\frac{1}{|\ww_k|^2-\theta - \ww_k^* W_k\left(W_k^*W_k - \theta\right)^{-1}W_k^*
\ww_k}\,\\
&= - 
\frac{1}{\theta \left(1+  \ww_k^* \left(W_kW_k^*-\theta\right)^{-1}
\ww_k \right)}\,.
\end{split}\end{equation}
where $\ww_k = \xx_k / \sqrt{N}$ is the $k$-th column  of the matrix $X_N$ and $W_k$ denotes 
the $N\times (N-1)$ matrix obtained by removing the $k$-th column from the matrix $X_N$ (notice that  $W_k^*W_k$ is the $(N-1)\times (N-1)$ minor of $X_N^*X_N$, obtained by removing
the $k$-th row and the $k$-th column). We used here the well-known identity \[ W_k\left(W_k^*W_k-\theta\right)^{-1}W_k^* = W_k^*W_k\left(W_kW_k^*-\theta\right)^{-1}\, \]
valid for $\text{Im } \theta \not = 0$, which can be proved using the Neumann expansion of the resolvent. Eq. (\ref{e:resHkk}) gives the following formula for the Stieltjes transform $\Delta_N(\theta)$:
\beq
\label{e:res2}
\Delta_N (\theta) = - \frac{1}{N} \sum_{k=1}^N
\frac{1}{\theta \left(1+  \ww_k^* \left(W_kW_k^*-\theta\right)^{-1}
\ww_k \right)}\,.
\eeq

\subsection{Properties of $\Delta$}

We collect here some properties of the Stieltjes transform $\Delta (\theta)$ of the Marchenko-Pastur distribution $\rho_{MP}$, defined by
\begin{equation}\label{eq:De} \Delta (\theta) = \int_\R dx \frac{1}{x-\theta} \rho_{MP} (x) = -\frac{1}{2} + \frac{1}{2} \sqrt{1-\frac{4}{\theta}} \end{equation}
where we use the branch of the square root with $\text{Re} \sqrt{1-4/\theta} \geq 0$. We use the fixed point equation
\begin{equation}
\label{eq:De-fix}
\theta (\Delta (\theta) +1) = -\frac{1}{\Delta (\theta)}.
\end{equation}

\begin{lemma}\label{l:propDe}
Let $\theta = E + i \eta$, with $\eta >0$. Then 
\begin{equation}\label{eq:De-b1} |\Delta (\theta)|^2 \leq \frac{1}{E} \quad \text{and }  \quad |1 + \Delta (\theta)|^2 \geq max \left\{ \frac{E}{E^2 + \eta^2} , \frac{1}{4} \right\} \end{equation}
Moreover 
\begin{equation}\label{eq:De-b2}  Im \, \Delta (\theta) \geq C \, \frac{|E-4|^{1/2} + \eta^{1/2}}{(E^2 + \eta^2)^{1/4}} \end{equation}
if $E^2 + \eta^2 \leq 4 E$ (this condition defines a circle of radius $2$ around $(E,\eta) = (2,0)$). 
\end{lemma}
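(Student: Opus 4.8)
The plan is to reduce all three estimates to elementary facts about $w=\sqrt{1-4/\theta}$, using throughout the branch with $\Re w\ge 0$ fixed in \eqref{eq:De}; then $\Delta(\theta)=\tfrac12(w-1)$, $1+\Delta(\theta)=\tfrac12(1+w)$, and $\Im\Delta(\theta)=\tfrac12\Im w$. Writing $v=1-4/\theta$ and $w=p+iq$, I would first record that, since $\theta=E+i\eta$ with $\eta>0$, one has $\Re v=1-4E/(E^2+\eta^2)$, $\Im v=4\eta/(E^2+\eta^2)>0$, and $|v|=|\theta-4|/|\theta|=\sqrt{(E-4)^2+\eta^2}/\sqrt{E^2+\eta^2}$; and that $\Re w=p\ge 0$ together with $\Im v=2pq>0$ forces $p,q>0$, with $p^2=\tfrac12(|v|+\Re v)$ and $q^2=\tfrac12(|v|-\Re v)$.

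For the two lower bounds on $|1+\Delta(\theta)|^2$ in \eqref{eq:De-b1}: from the above, $|1+\Delta(\theta)|^2=\tfrac14|1+w|^2=\tfrac14\big((1+p)^2+q^2\big)=\tfrac14\big(1+2p+|v|\big)$, which is $\ge\tfrac14$ at once because $p\ge 0$. For the bound $|1+\Delta(\theta)|^2\ge E/(E^2+\eta^2)$ it suffices, again by $p\ge 0$, to show $1+|v|\ge 4E/(E^2+\eta^2)$; putting $s=E^2+\eta^2$ and $R=(E-4)^2+\eta^2=s-8E+16\ge 0$, so that $|v|=\sqrt{R/s}$, this is the inequality $s+\sqrt{sR}\ge 4E$. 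It is immediate when $s\ge 4E$, and when $s<4E$ it reduces (both sides of $\sqrt{sR}\ge 4E-s$ being nonnegative) after squaring to $s^2-8Es+16s\ge 16E^2-8Es+s^2$, i.e.\ to $16s\ge 16E^2$, i.e.\ to $\eta^2\ge 0$. The bound $|\Delta(\theta)|^2\le 1/E$ then comes for free from the fixed point equation \eqref{eq:De-fix}: taking moduli gives $|\Delta(\theta)|^2\,|1+\Delta(\theta)|^2=1/|\theta|^2$, and since $|1+\Delta(\theta)|^2\ge E/(E^2+\eta^2)=E/|\theta|^2>0$ we get $|\Delta(\theta)|^2\le 1/E$.

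For \eqref{eq:De-b2}: the hypothesis $E^2+\eta^2\le 4E$ is precisely $\Re v=1-4E/(E^2+\eta^2)\le 0$, so $(\Im\Delta(\theta))^2=\tfrac14 q^2=\tfrac18(|v|-\Re v)\ge\tfrac18|v|=\sqrt{(E-4)^2+\eta^2}/\big(8\sqrt{E^2+\eta^2}\big)$. Using $\sqrt{(E-4)^2+\eta^2}\ge\tfrac1{\sqrt2}(|E-4|+\eta)$ (from $(a+b)^2\le 2(a^2+b^2)$), and then taking square roots together with $(a+b)^{1/2}\ge\tfrac1{\sqrt2}(a^{1/2}+b^{1/2})$, yields \eqref{eq:De-b2} with an explicit constant, e.g.\ $C=(16\sqrt2)^{-1/2}$.

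No step here is a genuine obstacle. The only mildly delicate point is the case split and squaring in $s+\sqrt{sR}\ge 4E$, where one must keep track of signs; and it is worth emphasising that the sign condition $\Re v\le 0$ used for \eqref{eq:De-b2} is exactly the hypothesis defining the disk of radius $2$ about $(E,\eta)=(2,0)$. As an alternative one could obtain \eqref{eq:De-b2} from the Stieltjes representation $\Im\Delta(\theta)=\eta\int_{\R}\rho_{MP}(x)\,|x-\theta|^{-2}\,dx$, but the computation through $w$ is shorter and entirely self-contained.
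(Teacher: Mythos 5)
Your proof is correct, and all the computations check out (including the explicit constant $C=2^{-9/4}$ in \eqref{eq:De-b2}). It is close in spirit to the paper's argument --- both rest on the explicit branch formula for $\sqrt{1-4/\theta}$ and on the fixed point equation \eqref{eq:De-fix} --- but the logical order for \eqref{eq:De-b1} is reversed. The paper first obtains $|\Delta|^2\le 1/E$ from the sign of the imaginary part of \eqref{eq:De-fix}, namely the identity $\Im\Delta\,(1-E|\Delta|^2)=\eta\,\Re(1+\Delta)\,|\Delta|^2$ together with $\Im\Delta>0$ and $\Re(1+\Delta)>0$, and only then deduces $|1+\Delta|^2\ge E/|\theta|^2$ by taking moduli in \eqref{eq:De-fix}. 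You instead prove $|1+\Delta|^2=\tfrac14\bigl(1+2p+|v|\bigr)\ge E/(E^2+\eta^2)$ directly, via the elementary inequality $s+\sqrt{sR}\ge 4E$, and then read off $|\Delta|^2\le 1/E$ from the modulus of the fixed point relation. Your route is slightly more computational but completely self-contained and yields explicit constants; the paper's sign argument is shorter and avoids the case split and squaring. For \eqref{eq:De-b2} the two arguments coincide: your bound $q^2=\tfrac12(|v|-\Re v)\ge\tfrac12|v|$ when $\Re v\le 0$ is exactly a proof of the fact $\Im\sqrt{z}\ge |z|^{1/2}/\sqrt2$ for $\Re z\le 0$, $\Im z\ge 0$, which the paper invokes without proof, and your observation that the hypothesis $E^2+\eta^2\le 4E$ is precisely $\Re(1-4/\theta)\le 0$ is the same reduction the paper uses.
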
 
\begin{proof}
{F}rom (\ref{eq:De-fix}), taking the  imaginary part,  we get
\begin{equation}
\label{e:imDelta}
\Im\Delta \left(1-|\Delta|^2 E\right) = \eta \Re (1+\Delta) |\Delta|^2\,.
\end{equation}
Eq. \eqref{eq:De} implies that $\Re(1+\Delta)>0$. Since $\Im \Delta (E+i\eta) >0$ for $\eta >0$,  together with \eqref{e:imDelta}, this implies the first bound in \eqref{eq:De-b1}. To get the second bound in \eqref{eq:De-b1} we first  notice that by (\ref{eq:De}), $\Re(1+\Delta) > 1/2$. This implies immediately that $|1+\Delta|^2>1/4$. The bound $|1+\Delta|^2> E/(E^2 +\eta^2)$ follows instead from (\ref{eq:De-fix}), combined with $|\Delta|^2 \leq 1/E$.

To show (\ref{eq:De-b2}), we observe that, from (\ref{eq:De}), 
\[ \text{Im } \Delta = \frac{1}{2} \text{Im } \sqrt{1-\frac{4}{\theta}} \geq \frac{1}{4} \sqrt{\left|1-\frac{4}{\theta}\right|} \geq  C \,  \frac{|E-4|^{1/2} + \eta^{1/2}}{(E^2 + \eta^2)^{1/4}} \]
under the assumption that $E^2 + \eta^2 < 4 E$. Here we used the fact that $\text{Im } \sqrt{z} \geq |z|^{1/2}/\sqrt{2}$, if $\text{Re } z \leq 0$ and $\text{Im } z \geq 0$. 

\end{proof}

\subsection{Large deviations of quadratic forms} 

We will make use of the following inequality for the fluctuations of quadratic forms, due to Hanson and Wright. For the proof of the next proposition we refer to \cite[Prop. 4.5]{esy09}, see also \cite[App. B]{eyy10} and \cite{hw71}.

\begin{proposition}
\label{p:HW}
For $j=1,\dots ,N$ let $x_{j} = \text{Re } x_j + i \text{Im } x_j$, where $\{ \text{Re } x_{j} , \text{Im } x_j \}_{j=1}^N$ is a sequence of $2N$ real iid random variables, whose common distribution satisfies \eqref{e:gd}. Let $A = (a_{ij})$ be a $N\times N$ complex  matrix. Then there exist constants 
$c,C > 0$ such that, for any $\de >0$
\beq
\PP\left(\left|\sum_{i,j=1}^Na_{ij}\left(x_i\bar x_j - \E x_i \bar x_j\right) \right|\geq \de \right) \leq C e^{-c\min\{\de/\sqrt{\Tr A^* A},\de^2/\Tr A^* A\}}\,,
\eeq
\end{proposition}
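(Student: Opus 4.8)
My plan is to prove this estimate (the Hanson--Wright inequality) by separating the diagonal and off-diagonal contributions to the quadratic form. Throughout I will use that \eqref{e:gd} is equivalent to a uniform subgaussian bound on the entries, $\E e^{s\Re x_j}\le e^{Cs^2}$ and $\E e^{s\Im x_j}\le e^{Cs^2}$ for all $s\in\R$, which in particular bounds $\E e^{t|x_j|^2}$ uniformly for $|t|$ below a threshold. Using $\E x_i\bar x_j=\de_{ij}$, I write
\[
Z:=\sum_{i,j=1}^N a_{ij}\bigl(x_i\bar x_j-\E x_i\bar x_j\bigr)=\sum_{i=1}^N a_{ii}\bigl(|x_i|^2-1\bigr)+\sum_{i\ne j}a_{ij}\,x_i\bar x_j=:Z_d+Z_o
\]
and bound $\PP(|Z|\ge\de)\le\PP(|Z_d|\ge\de/2)+\PP(|Z_o|\ge\de/2)$. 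For $Z_d$ the summands are independent, centred, and subexponential with $\psi_1$-norm $\lesssim|a_{ii}|$, so Bernstein's inequality gives $\PP(|Z_d|\ge\de)\le C\exp\bigl(-c\min\{\de^2/\sum_i|a_{ii}|^2,\ \de/\max_i|a_{ii}|\}\bigr)$; since $\sum_i|a_{ii}|^2\le\Tr A^*A$ and $\max_i|a_{ii}|\le(\Tr A^*A)^{1/2}$, this is dominated by the right-hand side of the proposition.

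The off-diagonal term $Z_o$ is where the real work lies. First I would apply the decoupling inequality for off-diagonal sums (de la Pe\~na--Montgomery-Smith) to reduce, up to a universal factor in the deviation, to the bilinear form $W:=\sum_{i,j}a_{ij}\,x_i\bar y_j$, where $(y_j)$ is an independent copy of $(x_j)$; passing from the decoupled off-diagonal sum to $W$ only changes it by the Bernstein sum $\sum_i a_{ii}x_i\bar y_i$, which has a tail of the same type. Next I would estimate the exponential moments of $\Re W$ and $\Im W$ by integrating out $y$ first: with $c_j:=(A^Tx)_j$, and using that $\Re y_j,\Im y_j$ are independent, centred and subgaussian,
\[
\E_y\,e^{\la\Re W}=\prod_{j=1}^N\E\,e^{\la(\Re y_j\,\Re c_j+\Im y_j\,\Im c_j)}\le\prod_{j=1}^N e^{C\la^2|c_j|^2}=e^{C\la^2\,x^*Bx},\qquad B:=\bar A A^T,
\]
where $B$ is Hermitian positive semidefinite with $\Tr B=\Tr A^*A$ and $\|B\|=\|A\|^2\le\Tr A^*A$ (operator norms).

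It then remains to control $\E_x e^{s\,x^*Bx}$, for which I would establish the one-sided estimate $\E_x\,e^{s\,x^*Bx}\le e^{C\,s\,\Tr B}$ valid for $0\le s\le c/\|B\|$, by Gaussian (Hubbard--Stratonovich) linearisation: writing $e^{s\|B^{1/2}x\|^2}=\E_\zeta e^{\sqrt{2s}\,\Re\langle B^{1/2}\zeta,x\rangle}$ for a complex Gaussian $\zeta$ with i.i.d.\ standard real and imaginary parts, and integrating out $x$ via \eqref{e:gd}, one gets $\E_x e^{s\,x^*Bx}\le\E_\zeta e^{C's\,\zeta^*B\zeta}=\prod_k(1-C''s\mu_k)^{-1}$, with $\mu_k\ge0$ the eigenvalues of $B$; since $\mu_k\le\|B\|$ and $\sum_k\mu_k=\Tr B$, this is at most $e^{Cs\,\Tr B}$ once $s\le c/\|B\|$. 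Applying this with $s=C\la^2$ yields $\E e^{\la\Re W}\le e^{C\la^2\Tr A^*A}$ for $|\la|\le c(\Tr A^*A)^{-1/2}$, and the Chernoff bound, optimising $\la$ over this interval, gives $\PP(\Re W\ge\de)\le C\exp\bigl(-c\min\{\de^2/\Tr A^*A,\ \de/(\Tr A^*A)^{1/2}\}\bigr)$. A union bound over $\pm\Re W$, $\pm\Im W$, undoing the decoupling constant, and combining with the estimate for $Z_d$ then gives the proposition.

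I expect the main obstacle to be exactly this control of the quadratic form $x^*Bx=\|A^Tx\|^2$ produced by the partial expectation in $Z_o$: the statement contains no information on $\|A\|$, so one is forced to use only $\|B\|\le\Tr A^*A$, and it is precisely this that generates the $\min\{\,\cdot\,,\,\cdot\,\}$ structure (a subgaussian tail for small $\de$, a subexponential tail for large $\de$). The Gaussian-linearisation step is the cleanest route, since the resulting Gaussian chaos has an explicit exponential moment; a direct even-moment computation recovers the subexponential regime without difficulty but needs a sharper combinatorial bound for the subgaussian regime, and the complex conjugates require a little extra bookkeeping throughout.
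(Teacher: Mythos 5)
Your proof is essentially correct, but note first that the paper itself contains no proof of Proposition \ref{p:HW}: it is quoted from \cite[Prop.~4.5]{esy09} (see also \cite[App.~B]{eyy10} and \cite{hw71}), so the comparison is with those references rather than with an argument in the text. What you propose is the modern decoupling proof of the Hanson--Wright inequality (in the spirit of Rudelson--Vershynin): diagonal/off-diagonal split with Bernstein for the diagonal, de la Pe\~na--Montgomery-Smith decoupling for the off-diagonal part, partial integration in $y$ reducing matters to the quadratic form $x^*Bx=\|A^Tx\|^2$ with $B=\bar A A^T$, and the Gaussian-linearisation bound $\E_x e^{s\,x^*Bx}\le e^{Cs\Tr B}$ for $0\le s\le c/\|B\|$, followed by a restricted-range Chernoff bound. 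Each step checks out: $\Tr B=\Tr A^*A$, $\|B\|=\|A\|^2\le\Tr A^*A$, the identity $e^{s\|u\|^2}=\E_\zeta e^{\sqrt{2s}\,\Re\langle\zeta,u\rangle}$ is exact, the exchange of expectations is justified by positivity, and $\E_\zeta e^{t\,\zeta^*B\zeta}=\prod_k(1-2t\mu_k)^{-1}$ follows from unitary invariance of the complex Gaussian, so the two-regime (subgaussian for small $\de$, subexponential for large $\de$) tail emerges exactly as claimed; your observation that only $\|B\|\le\Tr A^*A$ is available, and that this is what forces the $\min$ structure, is the right diagnosis. This route is genuinely different from the cited sources, which obtain such bounds via high-moment/martingale-type estimates (\cite{esy09,eyy10}) or the original moment-generating-function argument of \cite{hw71}; yours is arguably more transparent and modular, at the price of invoking the decoupling theorem as a black box. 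The only point to make explicit is centring: both your claimed equivalence of \eqref{e:gd} with the MGF bound $\E e^{s\Re x_j}\le e^{Cs^2}$ and the partial integrations require $\E\,\Re x_j=\E\,\Im x_j=0$, which the proposition does not state but which is implicit in the paper's setting ($\E x_{ij}=0$, $\E|x_{ij}|^2=1$) and is in fact necessary, since for non-centred entries the stated bound fails (e.g.\ $a_{ij}\equiv 1/N$); likewise your normalisation $\E x_i\bar x_j=\de_{ij}$ only affects constants. So this is a presentational caveat, not a gap.
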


The following proposition is a consequence of the Hanson-Wright inequality. Its proof can be found, for example, in \cite{esy09}. 

\begin{proposition}
\label{p:xi-1}
Let $\{\vv_\al\}_{\al=1}^m$ be  a set of $m$ orthonormal vectors in $\C^N$ and 
$\xx = (x_1, \dots , x_N)\in \C^N$, where $\{ \text{Re } x_j , \text{Im } x_j \}_{j=1}^N$ are $2N$ real iid random variables satisfying (\ref{e:gd}) such that $\E x_{j} = 0$, and $\E  |x_j|^2 = 1$. Then there exist two constants $c,C>0$ such that
\beq
\PP\left(
\sum_{\alpha=1}^m  |\xx \cdot \vv_\al|^2 \leq  \frac{m}{2}
\right) \leq C e^{-c \sqrt{m}}\,.
\eeq
\end{proposition}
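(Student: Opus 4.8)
The plan is to reduce the statement to the Hanson--Wright inequality of Proposition \ref{p:HW}. First I would introduce the quadratic form $Q = \sum_{\alpha=1}^m |\xx\cdot\vv_\alpha|^2 = \xx^* P \xx$, where $P = \sum_{\alpha=1}^m \vv_\alpha \vv_\alpha^*$ is the orthogonal projection onto the span of $\{\vv_\alpha\}_{\alpha=1}^m$. The key observation is that, since $\E x_i \bar x_j = \de_{ij}$, one has $\E Q = \Tr P = m$. Thus the event $\{Q \leq m/2\}$ is contained in the event $\{|Q - \E Q| \geq m/2\}$, which is exactly the kind of large deviation controlled by Proposition \ref{p:HW} applied with $A = P$ and $\de = m/2$.

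Next I would compute the relevant matrix norm appearing in the Hanson--Wright bound: since $P$ is an orthogonal projection of rank $m$, $P^* P = P^2 = P$, so $\Tr P^* P = \Tr P = m$. Plugging $\de = m/2$ and $\Tr A^* A = m$ into the bound of Proposition \ref{p:HW} gives
\begin{equation}
\PP\left(\left|\,Q - m\,\right| \geq \frac{m}{2}\right) \leq C\, e^{-c\min\{\,(m/2)/\sqrt{m},\,(m/2)^2/m\,\}} = C\, e^{-c\min\{\sqrt{m}/2,\,m/4\}} \leq C\, e^{-c' \sqrt{m}}
\end{equation}
for a suitable constant $c' > 0$, since for $m\geq 1$ the minimum is $\sqrt{m}/2$ (up to possibly absorbing small $m$ into the constant $C$). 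Combined with the inclusion $\{Q\leq m/2\}\subseteq\{|Q-m|\geq m/2\}$, this yields the claimed bound.

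There is essentially no serious obstacle here; the only point requiring a little care is the bookkeeping of which of the two exponents $\de/\sqrt{\Tr A^*A}$ and $\de^2/\Tr A^*A$ dominates. With $\de$ comparable to $\Tr A^*A = m$, the linear-in-$\de$ term $\de/\sqrt{m} \sim \sqrt{m}$ is the smaller one, which is exactly why $\sqrt{m}$ (and not $m$) appears in the final exponent --- this reflects the heavy-tailed contributions to the quadratic form that Hanson--Wright can only control at the $\sqrt{m}$ scale. One should also note that $P$ here is Hermitian, so there is no issue in applying Proposition \ref{p:HW} (which is stated for general complex $A$) to it.
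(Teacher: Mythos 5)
Your proof is correct and is essentially the standard argument the paper points to: the paper does not prove Proposition \ref{p:xi-1} itself but cites \cite{esy09}, where the same reduction is used, namely writing $\sum_\alpha |\xx\cdot\vv_\alpha|^2$ as the quadratic form of the rank-$m$ orthogonal projection (so that the mean is $m$ and $\Tr A^*A=m$) and applying the Hanson--Wright inequality of Proposition \ref{p:HW} with $\de=m/2$, which yields the exponent $\min\{\sqrt{m}/2,\,m/4\}\geq \sqrt{m}/4$ and hence the stated bound.
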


\section{Upper bound for the number of eigenvalues: Proof of Theorem \ref{t:apriori}}
\label{section apriori}

Recall that $I = [E,E+\eta]$, and that $\NH_I$ denotes the number of eigenvalues of the matrix $X_N^* X_N$ in $I$. We have
\[ \begin{split} \NH_I &= \sum_{\alpha =1}^N {\bf 1} (s_\alpha \in [E, E+\eta]) \leq C \sum_{\alpha=1}^N \frac{\eta^2}{(s_\alpha-E)^2 + \eta^2}  = C \eta \text{Im } \sum_{\alpha = 1}^N \frac{1}{s_\alpha - E - i \eta} \\ &=
C \eta \text{Im } \Tr \, (X_N^* X_N - E - i \eta)^{-1} = C \eta \text{Im } \sum_{k=1}^N (X_N^*X_N - E - i \eta)^{-1}_{kk}  \\ &= - C \eta \text{Im } \sum_{k=1}^N \frac{1}{\theta \left(1+  \ww_k^* \left(W_kW_k^*-\theta\right)^{-1} \ww_k \right)}\,.
\end{split}\]
where we put $\theta = E + i\eta$ and we used \eqref{e:resHkk}. Using the spectral decomposition of $W_k W_k^*$, we find 
\[ \begin{split} \NH_I  &\leq - C \eta \text{Im } \sum_{k=1}^N  
\frac{1}{\theta - \sum_{\al = 0}^{N-1} |\ww_k \cdot \vv_\al^{(k)}|^2  + \sum_{\al=1}^{N-1}
  \frac{s_\al^{(k)} |\ww_k \cdot \vv_\al^{(k)}|^2}{\s_\al^{(k)}-\theta}} \\
 &\leq C \sum_{k=1}^N  
\frac{1}{1 + \sum_{\al=1}^{N-1}
  \frac{s_\al^{(k)} |\ww_k \cdot \vv_\al^{(k)}|^2}{(\s_\al^{(k)}-E)^2 + \eta^2}}\\
  &\leq C \frac{N^2 \eta^2}{E}  \frac{1}{N} \sum_{k=1}^N \frac{1}{\sum_{\al : s_\alpha^{(k)} \in [E, E+\eta]} N |\ww_k \cdot \vv_\al^{(k)}|^2}
   \end{split} \]
where, in the second inequality, we used the fact that $|\text{Im } (1/z)| \leq 1/|\text{Im } z|$. Setting $K = 2 C^{1/2}$, it follows that
\[ \NH_I \leq K \frac{N\eta}{\sqrt{E}} \]
unless there exists $k \in \{1, \dots , N\}$, with
\[  \sum_{\al : s_\alpha^{(k)} \in [E, E+\eta]} N |\ww_k \cdot \vv_\al^{(k)}|^2 \leq  \frac{\NH_I}{4} \]
In other words,
\[\begin{split} \PP & \left(\NH_I \geq K \frac{N\eta}{\sqrt{E}} \right) \\ & \; = \PP \left(\NH_I \geq K \frac{N\eta}{\sqrt{E}} \text{ and $\exists k \in \{ 1, \dots, N\}$ with} \sum_{\al : s_\alpha^{(k)} \in [E, E+\eta]} N |\ww_k \cdot \vv_\al^{(k)}|^2 \leq  \frac{\NH_I}{4} \right) \end{split} \]
Since $W_k^* W_k$ is a minor of $X_N^* X_N$, it follows that its eigenvalues are interlaced between the eigenvalues of $X_N^* X_N$. This implies that 
\[ \left| \{ \al : s^{(k)}_\al  \in [E , E+\eta] \} \right| \geq \NH_I - 1 \geq \frac{\NH_I}{2} \] 
on the event we consider. Proposition \ref{p:xi-1} (applied with $E=1,\eta=0$) implies therefore that
\[ \PP  \left(\NH_I \geq K \frac{N\eta}{\sqrt{E}} \right) \leq C N e^{-c \sqrt{ \frac{K N\eta}{\sqrt{E}}}} \leq C e^{-c \sqrt{ \frac{K N\eta}{\sqrt{E}}}}  \]
after adjusting the constants. This concludes the proof of Theorem \ref{t:apriori}. 

\section{An estimate for the number of eigenvalues close to zero}
\label{s:wegner}

In this section, we show that, with high probability, there cannot be too many eigenvalues at distances smaller than $1/N^2$ from $0$. To this end, we need the boundedness of the probability density function of the entries. We use here the notation $\NH [a,b]$ to indicate the number of eigenvalues in  $[a,b]$. 

\begin{proposition}
\label{p:wegner}
Let $X_N=(x_{ij}/\sqrt{N})$ be a $N\times N$ matrix as described in equation \eqref{e:XN}. Assume that the probability density function $h(x)$ of $\text{Re } x_{ij}$ and $\text{Im } x_{ij}$ is bounded. Then there exists a constant $C , c >0$ such that 
\begin{equation}\label{e:wegner} \PP \left( \NH \left[0,\frac{K}{N^2}\right] \geq L\right) \le C e^{- L} \end{equation} 
for all $L > c K$. 
\end{proposition}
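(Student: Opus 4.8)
The plan is to bound the probability that many eigenvalues of $X_N^*X_N$ lie in the tiny interval $[0, K/N^2]$ by controlling the smallest singular values of $X_N$, i.e. the eigenvalues of $X_N^*X_N$ near zero. The key observation is that, since the interval $[0,K/N^2]$ has length $K/N^2$ which is much smaller than the typical eigenvalue spacing $\sqrt{E}/N$ anywhere in the bulk, having $L$ eigenvalues there is a rare event, and it should be controlled by a Wegner-type estimate that exploits the absolute continuity (boundedness of the density $h$) of the matrix entries. Concretely, I would first reduce to showing $\PP(\NH[0,K/N^2] \ge L) \le \E[(\text{something like } N^2/L \cdot \text{Im}\,\Delta_N(i\eta))^L]$ for a well-chosen $\eta \sim L/N^2$, using that $\NH[0,\eta] \le C\eta\,\text{Im}\,\Tr(X_N^*X_N - i\eta)^{-1}$ as in the proof of Theorem \ref{t:apriori}.

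The main step is then a probabilistic/combinatorial estimate on $\text{Im}\,\Tr(X_N^*X_N - i\eta)^{-1} = \sum_\alpha \eta/(s_\alpha^2 + \eta^2)$. Writing $X_N = X/\sqrt{N}$, the eigenvalues of $X_N^*X_N$ at scale $1/N^2$ correspond to singular values of $X$ at scale $1/\sqrt{N}$; one should relate the event $\{\NH[0,K/N^2]\ge L\}$ to the event that some $L$-dimensional coordinate subspace is nearly annihilated, or dually, use the Cauchy interlacing / minor expansion \eqref{e:resHkk} together with the formula \eqref{e:res2} to write the trace of the resolvent in terms of quadratic forms $\ww_k^*(W_kW_k^* - \theta)^{-1}\ww_k$. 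The cleanest route, however, is probably a direct argument: condition on all entries except those in one column, use that the conditional law of that column has bounded density, and show that for the resolvent diagonal entry to be large the column $\ww_k$ must land in a set of small Lebesgue measure; iterating over $L$ columns whose removal still leaves an interlaced spectrum with $\ge L/2$ eigenvalues near zero gives a factor $(C/N)^{L}$ against a combinatorial factor $\binom{N}{L} \le (CN/L)^L$, yielding $(C/L)^L \le e^{-L}$ once $L \ge cK$.

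More precisely, I would argue as follows. On the event $\{\NH[0,K/N^2] \ge L\}$, pick $L$ eigenvalues $s_{\alpha_1}\le\cdots\le s_{\alpha_L}$ in $[0,K/N^2]$; by a union bound over the choice of a row index $k$ and using interlacing, after removing row/column $k$ the minor $W_k^*W_k$ still has at least $L-1$ eigenvalues in $[0, K/N^2]$. Then from \eqref{e:resHkk} the diagonal resolvent entry at $\theta = iK/N^2$ satisfies $\text{Im}\,(X_N^*X_N - \theta)^{-1}_{kk} \gtrsim N^2 \sum_{\alpha: s_\alpha^{(k)} \le K/N^2} |\ww_k\cdot\vv_\alpha^{(k)}|^2$, which is large unless $\ww_k$ is nearly orthogonal to an $(L-1)$-dimensional subspace. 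Because the column $\xx_k = \sqrt{N}\,\ww_k$ has a bounded joint density (being a product of $2N$ one-dimensional densities bounded by $\|h\|_\infty$), the probability that $\sum_\alpha |\xx_k\cdot\vv_\alpha^{(k)}|^2 \le 1$ for a fixed $(L-1)$-dimensional subspace is at most $(\text{vol of unit ball in } \R^{2(L-1)}) \cdot \|h\|_\infty^{2(L-1)} \le (C/L)^{L-1}$ by a standard volume bound (the projection onto the $2(L-1)$ real coordinates of the subspace is confined to a ball of radius $1$). Combining the union bound over $k \in \{1,\dots,N\}$ with this estimate gives $\PP(\NH[0,K/N^2]\ge L) \le N(C/L)^{L-1}$, and absorbing the factor $N$ using $L \ge cK$ and logarithms (or simply choosing $c$ large) yields $Ce^{-L}$.

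The hard part will be making the volume/density bound quantitative enough to beat the combinatorial and $N$ factors — one must track constants carefully so that the bound $(C/L)^{L}$ really is summable and dominates $N$, which forces the hypothesis $L > cK$ (the $K$-dependence enters through the radius of the ball, scaled by $K$, since $\theta = iK/N^2$ contributes a factor $K$ in the lower bound on the imaginary part of the resolvent). A subsidiary technical point is handling the (unlikely but possible) degenerate configurations where $\ww_k$ has a small component; the subgaussian/bounded-density assumptions and Proposition \ref{p:xi-1}-type reasoning should dispatch these. I expect no conceptual obstacle beyond this bookkeeping, since the mechanism — absolute continuity of the entries forbidding too many small singular values — is the standard Wegner estimate philosophy.
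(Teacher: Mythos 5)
Your underlying mechanism is the right one, and it is essentially the engine of the paper's proof: by \eqref{e:resHkk} and interlacing, the event $\NH[0,K/N^2]\ge L$ forces some column $\xx_k$ to be nearly orthogonal to the $(L-1)$-dimensional span of the low-lying eigenvectors of the minor $W_kW_k^*$, and the bounded density of the entries makes this cost roughly $(C/L)^{L}$ (the paper packages this small-ball input as the negative-moment bound of Lemma A.1 of \cite{ms11}). But two steps, as written, do not work. First, your key display is backwards: with $\theta$ of modulus $\sim K/N^2$, formula \eqref{e:resHkk} gives an \emph{upper} bound $\Im\,(X_N^*X_N-\theta)^{-1}_{kk}\le N\bigl(\sum_\alpha c_\alpha|\xx_k\cdot\vv_\alpha^{(k)}|^2\bigr)^{-1}$ with $c_\alpha$ as in \eqref{e:yal} — the quadratic form sits in the denominator — not a lower bound $\gtrsim N^2\sum_\alpha|\ww_k\cdot\vv_\alpha^{(k)}|^2$. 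It is precisely the upper bound that turns ``$\NH\ge L$, hence some diagonal resolvent entry is large'' into ``some column has small overlap''; from your lower bound the needed implication does not follow. Second, the claim that the projection of $\xx_k$ onto a fixed $(L-1)$-dimensional subspace has density bounded by $\|h\|_\infty^{2(L-1)}$ is not a ``standard volume bound'': marginals of product measures with bounded one-dimensional densities are only bounded by $(C\|h\|_\infty)^{\dim}$, a genuinely nontrivial Ball/Rudelson--Vershynin-type fact (with constant exactly $\|h\|_\infty$ the claim is false). The extra $C^L$ is harmless for your estimate, but the justification is missing; the paper imports exactly this input via Lemma A.1 of \cite{ms11}.

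The more serious structural gap is the union bound over the column index. Your final estimate is $N(C/L)^{L-1}$, and the factor $N$ cannot be absorbed in the regime the proposition actually claims: $L$ is only required to exceed $cK$, and $K$ may be of order one, so $L$ need not grow with $N$; ``choosing $c$ large'' does not make $L\gtrsim\log N$, and for fixed $L$ the bound $N(C/L)^{L-1}$ is vacuous as $N\to\infty$. Thus your argument proves the statement only for $L\gtrsim\log N$ (which happens to suffice for the application in Lemma \ref{p:Atilde}, where $L\ge(\log N)^b$, but not for the proposition as stated). The paper avoids any union bound: it applies Markov's inequality to $(\NH/L)^p$ with $p=(L-2)/2$, then H\"older and exchangeability of the columns to reduce to the single expectation $\E\bigl(\Im\,(X_N^*X_N-\theta)^{-1}_{11}\bigr)^p$ with prefactor $(CK/(NL))^p$, whose $N^{-p}$ exactly cancels the $N^p$ produced by the resolvent bound; the negative-moment lemma then gives $(CK/L)^{(L-2)/2}\le Ce^{-L}$ for $L>cK$, with no stray $N$. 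To repair your proof you would need to replace the ``there exists a bad $k$ plus union bound'' step by such an averaged/moment argument (or otherwise eliminate the factor $N$), in addition to fixing the direction of the resolvent inequality.
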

\begin{proof}
We start with the observation that
\[ 
\PP (\NH [0, K/N^2] \geq L ) = \PP ( (\NH [0,K/N^2] / L)^p \geq 1) \leq  \E (\NH [0,K/N^2] / L)^p \]
Next, we notice that 
\[ \begin{split}  \NH [0, K/N^2] &\leq \frac{CK}{N^2}  \sum_{k=1}^N \text{Im} \left(X_N^* X_N - \theta \right)^{-1}_{kk} 
\end{split} \]
where we set $\theta = K N^{-2} + i K N^{-2}$. This implies that
\[ \left(\frac{\NH [0, K / N^2]}{L} \right)^p \leq \left( \frac{CK}{NL} \right)^p \left( \frac{1}{N} \sum_{k=1}^N  
\text{Im} \left(X_N^* X_N - \theta \right)^{-1}_{kk}  \right)^p \leq  \left( \frac{CK}{NL} \right)^p \left( \frac{1}{N} \sum_{k=1}^N  
(\text{Im} \left(X_N^* X_N - \theta \right)^{-1}_{kk})^p  \right) \]
by H\"older inequality. Hence
\[\begin{split} \PP  (\NH [0, K/N^2] \geq L ) &\leq  \left( \frac{CK}{NL} \right)^p  \, \E 
( \text{Im} \left(X_N^* X_N - \theta \right)^{-1}_{11})^p \\
&\leq 
 \left( \frac{CK}{NL} \right)^p \E \, \left| \frac{1}{\theta + \theta \sum_{\al=0}^{N-1}
  \frac{|\ww_1 \cdot \vv_\al^{(1)}|^2}{\s_\al^{(1)} - \theta}}  \right|^p \\ &\leq 
   \left( \frac{CK}{L} \right)^p  \E  \frac{1}{\left(\sum_{\alpha=0}^{N-1} c_\alpha
|\xx_1 \cdot \vv_\al^{(1)}|^2\right)^p}
 \end{split} \]
where we neglected the real part of the denominator and we defined $\xx_1 
= \sqrt{N} \ww_1$ (it is a vector in $\C^N$, whose components have iid real and imaginary parts with zero mean and variance $1/2$), and 
\begin{align}
c_\alpha &= \frac{1}{( N^2 s_\alpha^{(1)}/K - 1)^2+1}
\label{e:yal}
\end{align}
We recall that the eigenvalues   $s_{\al}$  and  $s_{\al}^{(1)}$  are ordered in increasing order. 
On the event $\NH [0,K/N^2] \geq L$, the interlacing property implies that at least $L-1$ eigenvalues of the minor are in the interval $[0,K/N^2]$; i.e. $\s_{\al}^{(1)} \in [0,K/N^2]$  for $\al = 0,. .. , L-1$.  This implies that $c_\alpha \geq 1/2$ for all $\al = 1, \dots , L-1$. Therefore
\[  \PP  (\NH [0, K/N^2] \geq L ) \leq \left( \frac{2CK}{L} \right)^p  \E  \frac{1}{\left(\sum_{\alpha=0}^{L-1} |\xx_1 \cdot \vv_\al^{(1)}|^2 \right)^p} \]
Next, we take $p= (L-2)/2$. Since the matrix entries are assumed to have a bounded probability density function, Lemma A.1 of \cite{ms11} implies that 
\beq
\E_{\xx_1}  \frac{1}{\left(\sum_{k = 1}^{L-1}  |\xx_1 \cdot \vv_\al^{(1)}|^2 \right)^{(L-2)/2}}  \leq C \]
for $C>0$ indpendent of $L$. This concludes the proof of the proposition.
\end{proof}

\section{Convergence of the Stieltjes transform: Proof of Theorem \ref{t:sti}}
\label{section selfconsistent equation}

We start from the formula (\ref{e:res2}), rewritten as
\begin{equation}\label{eq:DeN-1}
\Delta_N = -\frac{1}{N} \sum_{k=1}^N \frac{1}{\theta \left(1+ \Delta_N + \frac{\Omega_k}{\sqrt{E}} \right)} \end{equation}
which immediately implies that
\begin{equation}\label{eq:DeN-err} \Delta_N + \frac{1}{\theta (\Delta_N +1)} = - \frac{1}{N} \sum_{k=1}^N \frac{\Omega_k/\sqrt{E}}{\theta (1+ \Delta_N + \Omega_k/ \sqrt{E})(1+\Delta_N)} \end{equation}
Here we defined the error terms
\[ \begin{split} \Omega_k = \; &\sqrt{E} \left( \ww_k^* (W_k W_k^* - \theta)^{-1} \ww_k - \frac{1}{N} \Tr \, (W_k W_k^* - \theta)^{-1} \right) \\ &+\sqrt{E} \left( \frac{1}{N} \, Tr \, (W_k W_k^* -\theta)^{-1} - \frac{1}{N} \Tr (X_N^* X_N - \theta)^{-1} \right) \end{split} \]
Observe that, with probability one, 
\begin{equation}
\label{e:tr1}
\sqrt{E} \left|\frac{1}{N}\Tr
(X_N^*X_N - \theta)^{-1}-\frac{1}{N}\Tr(W_k W_k^* -\theta)^{-1}\right|\leq
  C\frac{\sqrt{E}}{N\hscale}\,.
\end{equation}
This follows from 
\[ \frac{1}{N}\Tr(W_kW_k^*-\theta)^{-1} = - \frac{1}{N\theta} + \frac{1}{N} \Tr (W_k^* W_k - \theta)^{-1} \]
and from the interlacing of the eigenvalues of $W_k^* W_k$ between the eigenvalues of $X_N^* X_N$. To estimate the first difference in the error $\Omega_k$, we notice that 
\[ \E_{\ww_k} \ww_k^* (W_k W_k^* - \theta)^{-1} \ww_k = \frac{1}{N} \Tr \, (W_k W_k^* - \theta)^{-1} \]
Therefore, defining the matrix $A = (a_{ij})$, with
\[ a_{ij} = \frac{\sqrt{E}}{N} \sum_{\alpha} \frac{\overline{\vv_\alpha} (i) \vv_\alpha (j)}{s_\alpha^{(k)} - \theta} \]
and the vector $\xx = \sqrt{N} \ww_k = (x_1, \dots , x_N)$ (this is a vector in $\C^N$, whose components are order one random variables), we find
\[  \sqrt{E} \left(\ww_k^* (W_k W_k^* - \theta)^{-1} \ww_k - \frac{1}{N} \Tr \, (W_k W_k^* - \theta)^{-1} \right) = \sum_{ij} a_{ij} (x_i \overline{x}_j - \E_{\xx} x_i \overline{x}_j) \]
and therefore (taking into account also (\ref{e:tr1}))
\[ \PP ( |\Omega_k| \geq \eps) \leq \PP \left( \left| \sum_{ij} a_{ij} (x_i \overline{x}_j - \E_{\xx} x_i \overline{x}_j) \right| \geq \eps \right) \]
Observe that
\[ \Tr A^* A = \frac{E}{N^2} \sum_\alpha \frac{1}{|s_\alpha^{(k)} - \theta|^2} \]
In Lemma \ref{p:Atilde}, we show that, up to an event with probability at most $e^{-c (\log N)^{b/4}}$, 
\[ \Tr A^* A \leq C \frac{\sqrt{E}}{N\eta} \]
Therefore, Proposition \ref{p:HW} implies that
\[ \PP  ( |\Omega_k| \geq \eps) \leq C e^{-c (\log N)^{b/4}} + C e^{-c \eps \sqrt{\frac{N\eta}{\sqrt{E}}}} \]
and thus
\begin{equation}\label{eq:PPmaxX} \PP \left( \max_{k=1, \dots , N} |\Omega_k| \geq \eps \right) \leq C e^{-c (\log N)^{b/4}} + C e^{-c \eps \sqrt{\frac{N\eta}{\sqrt{E}}}} \end{equation}
after adjusting the constants. We restrict now our attention to the event $|\Omega_k| \leq \eps$ for all $k=1, \dots , N$.


To complete the proof of Theorem \ref{t:sti}, we use a continuity argument. 
We fix $\kappa > 0$, and consider $\theta = E + i \eta \in \C$ with $0 < E < 4-\kappa$, $0 < \eta < \kappa E$, $N \eta / \sqrt{E} \geq (\log N)^{b}$. We connect $\theta$ with the point $\theta_0 = 2 + 2 i \kappa$. Let $L$ denote the line segment connecting $\theta$ and $\theta_0$. Note that, on $L$, $(E-2)^2 + \eta^2 \leq 4$ always holds. Hence, Lemma \ref{l:propDe} implies that, on $L$, $\text{Im } \Delta > C_0 / \sqrt{E}$ for a constant $C_0$ depending only on $\kappa$ ($C_0$ can be chosen as $\text{const} \cdot \kappa^{1/2}/(1+\kappa^2)^{1/4}$).

We claim now that, if $|\Delta_N - \Delta| \leq C_0/  (2 \sqrt{E})$ somewhere on $L$, then $|\Delta_N - \Delta| \leq C \eps/\sqrt{E}$ where $C$ depends only on $\kappa$. In fact, $\text{Im } \Delta > C_0 / \sqrt{E}$ and $|\Delta_N - \Delta| \leq C_0/ (2\sqrt{E})$ imply that $\text{Im } \Delta_N > C_0 / (2\sqrt{E})$. This implies (for $\eps < C_0/4$) that $\text{Im } (\Delta_N + (\Omega_k/\sqrt{E})) \geq C_0 / (4\sqrt{E})$. Hence (\ref{eq:DeN-err}) gives
\[ \left| \Delta_N + \frac{1}{\theta (\Delta_N + 1)} \right| \leq \frac{8\eps}{C_0^2 \sqrt{E}} \]
Subtracting the fixed point equation $\Delta + 1/ (\theta (\Delta+1)) = 0$, we find
\[ \left| (\Delta_N - \Delta) \left( 1 + \frac{\Delta}{\Delta_N+1} \right) \right| \leq \frac{8\eps}{C_0^2 \sqrt{E}}  \]
Therefore
\[ \begin{split} | \Delta_N - \Delta| \leq \; & \frac{8\eps}{C_0^2 \sqrt{E}} \frac{|\Delta_N + 1|}{|\Delta_N + \Delta + 1|} \\ \leq 
\; & \frac{8\eps}{C_0^2 \sqrt{E}}  \left[ {\bf 1} (|\Delta_N + 1| > 2 |\Delta|) \frac{|\Delta_N + 1|}{|\Delta_N + \Delta + 1|}  + {\bf 1} 
 (|\Delta_N + 1| < 2 |\Delta|) \frac{|\Delta_N + 1|}{|\Delta_N + \Delta + 1|} \right] \\ \leq 
\; & \frac{16\eps}{C_0^2 \sqrt{E}} \left(1+\frac{1}{C_0}\right)  
 \end{split}
\]
where we used that, on $L$, $\text{Im } \Delta > C_0/\sqrt{E}$ and $|\Delta| \leq 1/ \sqrt{E}$ (see Lemma \ref{l:propDe}). Theorem \ref{t:sti} follows because, from \cite{MP67}, $|\Delta_N (2 + 2 i\kappa) - \Delta (2 + 2 i \kappa)| \leq C_0 / (2\sqrt{2})$ for $N$ large enough.
This completes the proof of Theorem \ref{t:sti}.

\begin{lemma}\label{p:Atilde}
Let $X_N = (x_{ij} / \sqrt{N})$ be an $N \times M$ matrix as defined in (\ref{e:XN}). Denote by $s_\al$ the eigenvalues of $X_N^* X_N$. Assume the real and imaginary part of the entries $x_{ij}$ are iid random variables with a common bounded probability density function. Let $\theta  = E + i \eta$, with $\eta \leq E$ and $N\eta/ \sqrt{E} \geq (\log N)^b$. Then there exist constants $C_1, C_2, c >0$ with 
\begin{equation}
\label{e:boundAtilde}
\PP \left(\frac{E}{N^2}\sum_{\al=1}^{N-1}\frac{1}{|s_\al-\theta|^2} \geq C_1 \frac{\sqrt E}{N\eta}\right) \leq C_2 e^{-c \left(\log N\right)^{b/4}}.
\end{equation}
\end{lemma}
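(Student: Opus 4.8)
The plan is to recognize the quantity in \eqref{e:boundAtilde} as, up to a factor, $\Im\Delta_N(\theta)$. Since $\Im(s_\al-\theta)^{-1}=\eta\,|s_\al-\theta|^{-2}$, one has $\sum_{\al=1}^{N}|s_\al-\theta|^{-2}=\eta^{-1}\Im\Tr(X_N^*X_N-\theta)^{-1}=(N/\eta)\,\Im\Delta_N(\theta)$, and dropping the nonnegative term $\al=N$,
\[
\frac{E}{N^2}\sum_{\al=1}^{N-1}\frac{1}{|s_\al-\theta|^2}\le\frac{E}{N\eta}\,\Im\Delta_N(\theta).
\]
Hence it suffices to show that $\Im\Delta_N(E+i\eta)\le C_1/\sqrt{E}$ with probability at least $1-C_2e^{-c(\log N)^{b/4}}$, and I would establish the bound in this form. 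Writing $\Im\Delta_N(E+i\eta)=\tfrac1N\sum_\al\tfrac{\eta}{(s_\al-E)^2+\eta^2}$, I would split the sum according to the distance $|s_\al-E|$ and bound each piece using the a priori counting estimates of Theorem \ref{t:apriori} and Proposition \ref{p:wegner}, so that on the intersection of the corresponding good events the bound is deterministic.

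For the eigenvalues with $|s_\al-E|\le E/2$ I would use dyadic shells $S_j=\{|s_\al-E|\in(2^{j-1}\eta,2^j\eta]\}$, $j\ge1$, together with $S_0=\{|s_\al-E|\le\eta\}$, up to $2^j\eta\le E/2$. An eigenvalue in $S_j$ contributes at most $4\cdot2^{-2j}\eta^{-1}$ to $\sum_\al\tfrac{\eta}{(s_\al-E)^2+\eta^2}$, so $S_j$ contributes at most $C\,2^{-2j}(N\eta)^{-1}\NH[E-2^j\eta,E+2^j\eta]$ to $\Im\Delta_N$; since $[E-2^j\eta,E+2^j\eta]\subseteq[E/2,3E/2]$ and $N\,2^j\eta/\sqrt{E/2}\ge N\eta/\sqrt{E}\ge(\log N)^b$, Theorem \ref{t:apriori} gives $\NH[E-2^j\eta,E+2^j\eta]\le CK\,N2^j\eta/\sqrt{E}$, whence $S_j$ contributes $\le C'K/(2^j\sqrt{E})$ and the geometric series sums to $O(K/\sqrt{E})$. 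The eigenvalues with $s_\al\ge3E/2$ I would group into blocks $[2^{k-1}\tfrac{3E}{2},2^k\tfrac{3E}{2}]$, $k\ge1$, up to $s_\al\le4$: Theorem \ref{t:apriori} applies to each ($N\,2^kE/\sqrt{2^kE}\ge N\sqrt{E}\ge(\log N)^b$) and gives $\NH[\cdot]\le CK\,N\sqrt{2^kE}$, so, an eigenvalue in such a block being at distance $\asymp 2^kE$ from $E$, the $k$-th block contributes $\le C'K/(2^{3k/2}\sqrt{E})$ to $\Im\Delta_N$, again summable.

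The remaining and most delicate contribution is that of the eigenvalues in $[0,E/2]$, strictly between the hard edge and $E/2$. Each lies at distance between $E/2$ and $E$ from $E$, so contributes $\asymp\eta/E^2$, and the total contribution to $\Im\Delta_N$ is $\asymp(\eta/NE^2)\,\NH[0,E/2]$; since $\eta\le E$ this is $\le C_1/\sqrt{E}$ provided $\NH[0,E/2]\le CK\,N\sqrt{E}$, the Marchenko–Pastur size (for $E$ bounded away from $0$ the crude $\NH[0,E/2]\le N$ already suffices, so the point is small $E$). To get this I would split $[0,E/2]$ into: (i) $[0,K_0/N^2]$ with $K_0$ a large constant, where Proposition \ref{p:wegner} gives $\NH[0,K_0/N^2]\le L$ off an event of probability $Ce^{-L}$ for any $L>cK_0$ — I would take $L=(\log N)^{b/4}$; (ii) the range $[(\log N)^{2b}/N^2,E/2]$ (empty unless $E\ge2(\log N)^{2b}/N^2$), covered by the dyadic intervals $[t/2,t]$, each in the scope of Theorem \ref{t:apriori} (since $N\sqrt{t/2}\gtrsim(\log N)^b$) with $\NH[t/2,t]\le CK\,N\sqrt{t}$, the geometric sum being $O(KN\sqrt{E})$; and (iii) the intermediate band $[K_0/N^2,(\log N)^{2b}/N^2]$, where neither the natural dyadic windows nor Proposition \ref{p:wegner} gives a bound of the right order. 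On (iii) I would use a non-dyadic sequence of windows $[b_{i-1},b_i]$ with $b_0=K_0/N^2$ and $b_i-b_{i-1}=(\log N)^b\sqrt{b_{i-1}}/N$, chosen so that Theorem \ref{t:apriori} applies to each with its hypothesis met with equality, hence $\NH[b_{i-1},b_i]\le K(\log N)^b$; the key computation is that this sequence reaches the dyadic regime $\sim(\log N)^{2b}/N^2$ after only a handful of steps (a number growing extremely slowly, like $\log\log\log N$), which keeps the total count on the band of order $K(\log N)^b\le KN\sqrt{E}$ and feeds back into the shell estimates above.

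Finally I would collect probabilities. The construction uses one application of Proposition \ref{p:wegner} (failure probability $\le Ce^{-(\log N)^{b/4}}$) and $O(\log N)$ applications of Theorem \ref{t:apriori}, each to a window $[E',E'+\eta']$ with $N\eta'/\sqrt{E'}\ge(\log N)^b$, hence each of failure probability $\le e^{-c\sqrt{K\,N\eta'/\sqrt{E'}}}\le e^{-c\sqrt{K}(\log N)^{b/2}}$; a union bound gives total failure probability $\le C_2e^{-c(\log N)^{b/4}}$ for $N$ large, and on the complement all the counting bounds hold, so $\Im\Delta_N(E+i\eta)\le C_1/\sqrt{E}$ with $C_1$ depending only on $K$ and $b$. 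The main obstacle is exactly step (iii): the hypotheses leave a band of scales at the hard edge — above $1/N^2$, where Proposition \ref{p:wegner} is sharp, and below $(\log N)^{2b}/N^2$, the smallest scale to which Theorem \ref{t:apriori} applies at its natural aspect ratio — on which the eigenvalue count must nonetheless be controlled with essentially Marchenko–Pastur precision, and the delicate point is to choose the windows covering this band so that their number (hence the constant in the resulting count) stays controlled.
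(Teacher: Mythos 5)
Your decomposition away from the hard edge (dyadic shells of width $2^j\eta$ around $E$, dyadic blocks above $3E/2$, dyadic intervals $[t/2,t]$ going down toward zero, plus Proposition \ref{p:wegner} at the very bottom) is essentially the paper's argument, and those parts are fine (modulo the cosmetic point that eigenvalues may exceed $4$, so the blocks above $E$ should be continued up to $\|X_N^*X_N\|$, or over all $k\lesssim \log N$, rather than stopped at $4$). The genuine problem is your step (iii). The lemma asserts a fixed constant $C_1$, but the hypotheses allow the borderline regime $\eta\asymp E$ and $N\eta/\sqrt{E}\asymp(\log N)^b$, i.e. $E\asymp(\log N)^{2b}/N^2$; there the contribution of $[0,E/2]$ is $\asymp \NH[0,E/2]/(N^2E)$ and one needs $\NH[0,E/2]\le C\,NE^{3/2}/\eta\asymp C\,N\sqrt{E}\asymp C(\log N)^b$ with an absolute constant $C$. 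Your band construction uses $\Theta(\log\log\log N)$ windows (the deficit in the exponent of $\log N$ halves at each step, so reaching scale $(\log N)^{2b}/N^2$ from $K_0/N^2$ takes $\asymp\log_2(b\log\log N)$ steps), each contributing up to $K(\log N)^b$ eigenvalues; the sentence ``which keeps the total count on the band of order $K(\log N)^b$'' silently drops the number-of-windows factor, and with it the constant in \eqref{e:boundAtilde} would have to grow like $\log\log\log N$. So, as written, you prove a slightly weaker statement than the lemma in exactly the regime you identified as the delicate one.

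The band is in fact an artifact of two self-imposed restrictions, and the paper dissolves it with no new idea. First, Proposition \ref{p:wegner} is not limited to constant $K$: it holds whenever $L> cK$, so you may apply it with $K=(\log N)^b$ and $L=\tfrac{C_1}{3}N\eta/\sqrt{E}\;(\ge \tfrac{C_1}{3}(\log N)^b)$, which controls all of $[0,(\log N)^b/N^2]$ in one stroke with failure probability $e^{-cN\eta/\sqrt{E}}$; combined with $|s_\al-\theta|\ge\eta$, this piece of the sum is at most $\tfrac{E}{N^2\eta^2}\cdot\tfrac{C_1}{3}\tfrac{N\eta}{\sqrt{E}}=\tfrac{C_1}{3}\tfrac{\sqrt{E}}{N\eta}$. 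Second, Theorem \ref{t:apriori} only requires the aspect ratio to be $\ge(\log N)^{b'}$ for some $b'>2$; since $b>4$, the dyadic intervals $[2^{-k}E,2^{-k+1}E]$ may be pushed all the way down to scale $(\log N)^b/N^2$, where the aspect ratio is still $\ge(\log N)^{b/2}$ with $b/2>2$. The price is that the failure probabilities of those applications degrade to $e^{-c(\log N)^{b/4}}$ --- which is precisely the probability appearing in \eqref{e:boundAtilde} and explains the exponent $b/4$ there. With these two adjustments your argument closes with a uniform constant and no intermediate band at all.
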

\begin{proof}
We clearly have
\[ \begin{split} 
\PP \left( \frac{E}{N^2}\sum_{\al=1}^{N-1}\frac{1}{|s_\al-\theta|^2} \geq \; C_1 \frac{\sqrt E}{N\eta}\right) \leq \; &\PP \left( \frac{E}{N^2}\sum_{\al: s_\al \leq (\log N)^b/N^2} \frac{1}{|s_\al-\theta|^2} \geq \frac{C_1}{3}  \frac{\sqrt E}{N\eta}\right) \\ &+ \PP \left( \frac{E}{N^2}\sum_{\al: s_\al \in [(\log N)^b/N^2, E/2]} \frac{1}{|s_\al-\theta|^2} \geq \frac{C_1}{3}  \frac{\sqrt E}{N\eta}\right) \\ &+ \PP 
\left( \frac{E}{N^2}\sum_{\al: s_\al > E/2} \frac{1}{|s_\al-\theta|^2} \geq \frac{C_1}{3}  \frac{\sqrt E}{N\eta}\right) =  \text{I} + \text{II} + \text{III} \end{split} \]
We start by controlling the term $\text{I}$. To this end, note that
\[ \frac{E}{N^2}\sum_{\al: s_\al \leq (\log N)^b/N^2} \frac{1}{|s_\al-\theta|^2} \leq \frac{E}{N^2 \eta^2} \, \NH [0, (\log N)^b / N^2] \]
where, as usual, $\NH[a,b]$ denotes the number of eigenvalues of $X_N^* X_N$ in the interval $[a,b]$. Hence, by Prop. \ref{p:wegner},
\[ \text{I} \leq \PP \left( \NH [0, (\log N)^b/ N^2] \geq \frac{C_1}{3} \frac{N\eta}{\sqrt{E}} \right) \leq C e^{- c \frac{N\eta}{\sqrt{E}}} \leq C e^{-c (\log N)^b} \]
for appropriate constants $C,c>0$. Next, we consider the term $\text{II}$. We have 
\[ \begin{split} 
\frac{E}{N^2}\sum_{\al: s_\al \in [(\log N)^b / N^2 , E/2]} \frac{1}{|s_\al-\theta|^2} \leq \frac{E}{N^2} \sum_{k=2}^{k_0} \frac{\NH [ 2^{-k} E ,  2^{-k+1} E]}{E^2} 
\end{split} 
\]
where $k_0 > 0$ is chosen as the smallest integer with $2^{-k_0} E \leq (\log N)^b / N^2$. From 
Theorem \ref{t:apriori}, it follows that, for a sufficiently large $K >0$, 
\[ \NH [ 2^{-k} E ,  2^{-k+1} E] \leq C N 2^{-k/2}  \sqrt{E} \]
up to an event of probability at most $\exp (-c (2^{-k/2}N\sqrt{E})^{1/2})$. This implies that, apart from an event of total probability bounded by
\[ \sum_{k=2}^{k_0} e^{-c \sqrt{2^{-k/2} N\sqrt{E}}} \leq k_0 \, e^{-c \sqrt{2^{-k_0/2} N \sqrt{E}}} \leq k_0 \, e^{-c (\log N)^{b/4}}  \leq e^{-(c/2) (\log N)^{b/4}} \]
we have
\[  \frac{E}{N^2}\sum_{\al: s_\al \in [(\log N)^b / N^2 , E/2]} \frac{1}{|s_\al-\theta|^2} \leq C \frac{E}{N^2} \sum_{k=2}^{k_0} \frac{2^{-k/2} N \sqrt{E}}{E^2} \leq C \frac{1}{N\sqrt{E}} \leq C \frac{\sqrt{E}}{N\eta} \]
where we used the assumption $\eta \leq E$. 
Finally, we have to control the term $\text{III}$. To this end, we observe that
\[ \frac{E}{N^2} \sum_{\al : s_\al > E/2} \frac{1}{|s_\alpha - \theta|^2} \leq 
\frac{E}{N^2} \sum_{\al : s_\al > E/2} \frac{1}{(s_\al - E)^2 + \eta^2} \leq \frac{E}{N^2} \sum_{k \geq 0}  \frac{\NH_{I_k}}{2^{2k} \eta^2} \]
where we set $I_k = J_k \cap [ E/2, \infty)$, with $J_k = [E-2^k \eta, E-2^{k-1} \eta] \cup [ E+ 2^{k-1} \eta , E+ 2^k \eta]$ for $k \geq 1$ and $J_0 = [E-\eta, E+\eta]$. Observe that, by Theorem \ref{t:apriori},
\[ \NH_{I_k} \leq C \frac{2^k N\eta}{\sqrt{E}} \]
up to an event with probability at most
\[ e^{-c \sqrt{ \frac{2^k N \eta}{\sqrt{E}}}} \leq e^{-c \sqrt{ \frac{N\eta}{\sqrt{E}}}} \leq e^{-c (\log N)^{b/4}} \]
Therefore,
\[ \frac{E}{N^2} \sum_{\al : s_\al > E/2} \frac{1}{|s_\al - \theta|^2} \leq \frac{E}{N^2} \sum_{k \geq 0}  \frac{2^k N \eta}{2^{2k} \eta^2 \sqrt{E}} \leq C \frac{\sqrt{E}}{N\eta} \]
apart from an event with probability at most $e^{-c (\log N)^{b/4}}$. This completes the proof of the lemma.
\end{proof}

\section{Delocalization: Proof of Theorem \ref{t:del}}
\label{s:del}

Denote by  $\{s_\al\}_{\al=1}^N$  the eigenvalues of the matrix $X_N^*X_N$ with $s_1\leq ... \leq s_N$ and by $\{\uu_\al\}_{\al=1}^N$ the corresponding set of orthonormal eigenvalues.  From the equation $X_N^*X_N \uu_\al = s_\al \uu_\al$ and the condition $\|\uu_\al\|^2 = 1$ it follows that (see also  \cite[Cor. 25]{tv-rxv09} and \cite{ESYcmp09, esy-ap09})
\begin{equation*}
| \uu_\al (k)|^2 = \frac{1}{1+\frac{1}{N}\sum_{\beta=1}^{N-1} \frac{s_\beta^{(k)} |\uu_\beta^{(k)} \cdot \xx_k|^2}{(s_\alpha-s_\beta^{(k)})^2}} \end{equation*}
where $\xx_k = \sqrt{N} \ww_k$ and $\ww_k$ denotes the $k$-th column of the matrix $X_N$, while 
$s_\beta^{(k)}$ and $\uu_\beta^{(k)}$ are the eigenvalues and the corresponding eigenvectors of 
the matrix $W_k^*W_k$, where the matrix $W_k$ is obtained by removing the $k$-th column from the matrix $X_N$. For arbitrary $\eta >0$, we have 
\[ 
| \uu_\al (k)|^2 \leq \frac{N\eta^2}{s_\alpha \sum_{\beta: s_\beta^{(k)} \in [s_\alpha, s_\alpha + \eta]} |\uu_\beta^{(k)} \cdot \xx_k|^2}\]
Taking $\eta = \sqrt{s} \frac{(\log N)^b}{N}$, Theorem \ref{p:db} implies that
\[ \left|\{ s^{(k)}_\beta \in [s_\alpha, s_\alpha + \eta] \} \right| \geq C (\log N)^b \]
up to an event with probability smaller than $e^{-c (\log N)^b}$. Prop. \ref{p:xi-1} implies then that
\[ \sum_{\beta : s_\beta^{(k)} \in [s_\alpha, s_\alpha+ \eta]} |\uu_\beta^{(k)} \cdot \xx_k|^2 \geq \frac{(\log N)^b}{2} \] 
apart from an event with probability smaller than $e^{-c (\log N)^b}$. This implies that
\[ \PP \left(|\uu_\alpha (k)|^2 \geq \frac{(\log N)^b}{N}\right) \leq C e^{-c (\log N)^b} \]
Taking the maximum over $k$, Theorem \ref{t:del} follows.

\thebibliography{hhhh}

\bibitem{abp} A. Auffinger, G. Ben Arous, S. P{\'e}ch{\'e}: Poisson convergence for the largest eigenvalues of heavy-taled matrices. {\em Ann. Inst. Henri Poincar{\'e}: Probab. Stat.} {\bf 45} (2009), No. 3, 589Ð610.

\bibitem{ap05}
G. Ben Arous and S. P{\'e}ch{\'e}: Universality of local eigenvalue statistics for some sample covariance matrices. {\em Comm. Pure Appl. Math.} {\bf 58} (2005), No. 10, 1316--1357.

\bibitem{byy}
P. Bourgade, H.-T. Yau and J. Yin: Local circular law for random matrices.

\bibitem{spa1}
L. Erd{\H{o}}s, A. Knowles, H.-T. Yau and J. Yin: Spectral Statistics of Erd{\H{o}}s-R{\'e}nyi Graphs {I}: 
Local Semicircle Law. Preprint arXiv:1103.1919.

\bibitem{spa2}
L. Erd{\H{o}}s, A. Knowles, H.-T. Yau and J. Yin: Spectral Statistics of Erd{\H{o}}s-R{\'e}nyi Graphs {II}: Eigenvalue Spacing and the Extreme Eigenvalues. Preprint arXiv:1103.3869.

\bibitem{ban}
L. Erd{\H{o}}s, A. Knowles, H.-T. Yau and J. Yin: Delocalization and Diffusion Profile for Random Band Matrices.
Preprint arXiv:1205.5669.

\bibitem{eprsy}
L. Erd{\H{o}}s, S. P{\'e}ch{\'e}, J. Ram{\'{\i}}rez, B. Schlein and H.-T. Yau: Bulk Universality for Wigner 
Matrices. {\em Comm. Pure Appl. Math.} {\bf 63} (2010), No. 7, 895--925.

\bibitem{erstvy}
L. Erd{\H{o}}s, J. Ram{\'{\i}}rez, B. Schlein, T. Tao, V. Vu and H.-T. Yau: Bulk Universality for {W}igner {H}ermitian matrices with subexponential decay. {\em Math. Res. Lett.} {\bf 17} (2010), No. 4, 667--674.

\bibitem{esy-ap09}
L. Erd{\H{o}}s, B. Schlein and H.-T. Yau: Semicircle law on short scales and delocalization of eigenvectors for {W}igner random matrices. {\em Ann. Probab.} {\bf 37} (2009), No. 3, 815--852.

\bibitem{ESYcmp09}
L. Erd{\H{o}}s, B. Schlein and H.-T. Yau: Local semicircle law and complete delocalization for {W}igner random matrices. {\em Comm. Math. Phys.} {\bf 287} (2009), No. 2, 641--655.

\bibitem{esy09}
L. Erd{\H{o}}s, B. Schlein and H.-T. Yau: Wegner estimate and level repulsion for {W}igner random matrices. {\em IMRN } {\bf 2010} (2009), No. 3, 436--479. 

\bibitem{esy-inv}
L. Erd{\H{o}}s, B. Schlein and H.-T. Yau: Universality of random matrices and local relaxation flow.
{\em Invent. Math.} {\bf 185} (2011), no. 1, 75-119.

\bibitem{esyy12}
L. Erd{\H{o}}s, B. Schlein, H.-T. Yau and J. Yin: The local relaxation flow approach to universality of the local
statistics for random matrices. {\em Ann. Inst. H. Poincar\'e B: Probab. Statist.} {\bf 48} (2012), 1--46.

\bibitem{eyy10}
L. Erd{\H{o}}s, H.-T. Yau and J. Yin: Bulk universality for generalized Wigner matrices. Preprint arXiv:1001.3453.

\bibitem{eyy10b}
L. Erd{\H{o}}s, H.-T. Yau and J. Yin: Rigidity of Eigenvalues of Generalized Wigner Matrices. Preprint arXiv:1007.4652.

\bibitem{hw71}
D.L. Hanson and F.T. Wright: A bound on tail probabilities for quadratic forms in independent random variables. {\em Ann. Math. Statist.} {\bf 42} (1971), 1079--1083.

\bibitem{ly}
J. O. Lee, J. Yin: A Necessary and Sufficient Condition for Edge Universality of Wigner matrices. Preprint arXiv:1206.2251.

\bibitem{ms11}
A. Maltsev, B. Schlein: Average density of states of {H}ermitian {W}igner matrices. {\em Adv. Math. } {\bf 228} (2011), No. 5, 2797--2836.

\bibitem{MP67}
V.A. Mar{\v{c}}enko, L. A. Pastur: Distribution of eigenvalues in certain sets of random matrices. {\em Mat. Sb. (N.S.)} {\bf 72 (114)} (1967), 507--536.

\bibitem{py12}
N. Pillai and J. Yin: Universality of covariance matrices. Preprint arXiv:1110.2501. 

\bibitem{S}
A. Soshnikov: Universality at the edge of the spectrum in Wigner random matrices. {\it Comm. Math. Phys.} {\bf 207} (1999), No. 3, 697--733.

\bibitem{TV0}
T. Tao and V. Vu: Random Matrices: the Distribution of the Smallest Singular Values. {\it GAFA} {\bf 20} (2010), No. 1, 260--297.

\bibitem{tv09}
T. Tao and V. Vu: Random matrices: Universality of local eigenvalue statistics. {\em Acta Math.} {\bf 206} (2011), No. 1, 127--204.

\bibitem{tv10cmp}
T. Tao and V. Vu: Random matrices: universality of local eigenvalue statistics up to the edge. {\em Comm. Math. Phys.} {\bf 298} (2010), No. 2, 549--572.

\bibitem{tv-rxv09}
T. Tao and V. Vu: Random covariance matrices: Universality of local statistics of eigenvalues. Preprint arXiv:0912.0966.

\bibitem{WangSoftEdge}
K. Wang: Random covariance matrices: Universality of local statistics of eigenvalues up to the edge. Preprint arXiv:1104.4832.

\end{document}